\newcommand{\qed}{\mbox{}\hspace*{\fill}\nolinebreak\mbox{$\rule{0.6em}{0.6em}$}
}
\newcommand{\expect}{{\bf \mbox{\bf E}}}
\newcommand{\prob}{{\bf \mbox{\bf Pr}}}
\definecolor{gray}{rgb}{0.5,0.5,0.5}
\newcommand{\e}{{\epsilon}}
\newtheorem{theorem}{Theorem}[section]
\newtheorem{lemma}[theorem]{Lemma}
\newtheorem{corollary}[theorem]{Corollary}
\newtheorem{definition}[theorem]{Definition}
\newtheorem{remark}[theorem]{Remark}
\newenvironment{proof}{{\bf Proof:}}{$\qed$\par}
\newenvironment{proofnoqed}{{\bf Proof:}}{}
\newenvironment{proofof}[1]{\noindent{\bf Proof of #1:}}{$\qed$\par}
\newcommand{\coll}{{\mathcal R}}
\newcommand{\colltwo}{{\mathcal T}}
\newcommand{\collpart}{{\mathcal S}}
\newcommand{\edgesets}{\mathcal X}
\begin{document}

\title{\Large Perfect Matchings in $\tilde{O}(n^{1.5})$ Time in Regular Bipartite Graphs}
\author{Ashish Goel\thanks{
    Departments of Management Science and Engineering and (by courtesy)
    Computer Science, Stanford University.
    Email: {\tt ashishg@stanford.edu}.
    Research supported by NSF
    ITR grant 0428868, NSF CAREER award 0339262, and a grant from the
    Stanford-KAUST alliance for academic excellence.}\\
\and Michael Kapralov\thanks{
    Institute for Computational and Mathematical Engineering, Stanford University.
    Email: {\tt kapralov@stanford.edu}. Research supported by a Stanford Graduate Fellowship.}\\   
\and Sanjeev Khanna\thanks{Department of Computer and Information Science, University of Pennsylvania,
Philadelphia PA. Email: {\tt sanjeev@cis.upenn.edu}. Supported in
part by a Guggenheim Fellowship, an IBM Faculty Award, and by NSF Award CCF-0635084.}
}

\date{}
\maketitle
\setcounter{page}{0}
\thispagestyle{empty}
\pdfbookmark[1]{Title and abstract}{Myabstract}
\begin{abstract}

  We consider the well-studied problem of finding a perfect matching in
  $d$-regular bipartite graphs with $2n$ vertices and $m = nd$ edges. While
  the best-known algorithm for general bipartite graphs (due to Hopcroft and
  Karp) takes $O(m \sqrt{n})$ time, in regular bipartite graphs, a perfect
  matching is known to be computable in $O(m)$ time.  Very recently, the
  $O(m)$ bound was improved to $O(\min\{m, \frac{n^{2.5}\ln n}{d}\})$ expected
  time, an expression that is bounded by $\tilde{O}(n^{1.75})$.  In this
  paper, we further improve this result by giving an $O(\min\{m,
  \frac{n^2\ln^3 n}{d}\})$ expected time algorithm for finding a perfect
  matching in regular bipartite graphs; as a function of $n$ alone, the
  algorithm takes expected time $O((n\ln n)^{1.5})$.

  To obtain this result, we design and analyze a two-stage sampling scheme that reduces
  the problem of finding a perfect matching in a regular bipartite graph to the
  same problem on a subsampled bipartite graph with $O(n\ln n)$ edges. The
  first-stage is a sub-linear time uniform sampling that reduces the size of
  the input graph while maintaining certain structural properties of the
  original graph. The second-stage is a non-uniform sampling that takes
  linear-time (on the reduced graph) and outputs a graph with $O(n\ln n)$
  edges, while preserving a matching with high probability.  This matching is
  then recovered using the Hopcroft-Karp algorithm. While the standard
  analysis of Hopcroft-Karp also gives us an $\tilde{O}(n^{1.5})$ running
  time, we present a tighter analysis for our special case that results in the
  stronger $\tilde{O}(\min\{m, \frac{n^2}{d} \})$ time mentioned earlier.

  Our proof of correctness of this sampling scheme uses
  a new correspondence theorem between cuts and Hall's theorem ``witnesses'' for a perfect matching in a bipartite
  graph that we prove. We believe this theorem may be of independent
  interest; as another example application, we show that a perfect matching in
  the support of an $n \times n$ doubly stochastic matrix with $m$ non-zero
  entries can be found in expected time $\tilde{O}(m + n^{1.5})$.
\end{abstract}
\newpage

\section{Introduction}

A bipartite graph $G = (P, Q, E)$ with vertex set $P \cup Q$ and edge set $E
\subseteq P \times Q$ is said to be regular if every vertex has the same
degree $d$. We use $m=nd$ to denote the number of edges in $G$ and $n$ to
represent the number of vertices in $P$ (as a consequence of regularity, $P$
and $Q$ have the same size). Regular bipartite graphs are a fundamental
combinatorial object, and arise, among other things, in expander
constructions, scheduling, routing in switch fabrics, and
task-assignment~\cite{mr:random,amsz:color2003,cos:regular2001}.

A regular bipartite graph of degree $d$ can be decomposed into exactly $d$
perfect matchings, a fact that is an easy consequence of Hall's
theorem~\cite{b:graphtheory}, and is closely related to the Birkhoff-von
Neumann decomposition of a doubly stochastic matrix~\cite{b:bvn46,vn:bvn53}.
Finding a matching in a regular bipartite graph is a well-studied problem,
starting with the algorithm of K\"{o}nig in 1916~\cite{k:regular16}, which is
now known to run in time $O(mn)$. The well-known bipartite matching algorithm
of Hopcroft and Karp~\cite{hk:match73} can be used to obtain a running time of
$O(m\sqrt n)$. In graphs where $d$ is a power of 2, the following elegant
idea, due to Gabow and Kariv~\cite{gk:edge1982}, leads to an algorithm with
$O(m)$ running time. First, compute an Euler tour of the graph (in time
$O(m)$) and then follow this tour in an arbitrary direction. Exactly half the
edges will go from left to right; these form a regular bipartite graph of
degree $d/2$. The total running time $T(m)$ thus follows the recurrence $T(m)
= O(m) + T(m/2)$ which yields $T(m) = O(m)$. Extending this idea to the
general case proved quite hard, and after a series of improvements (eg. by
Cole and Hopcroft~\cite{ch:color82}, and then by Schrijver~\cite{s:color99} to
$O(md)$), Cole, Ost, and Schirra~\cite{cos:regular2001} gave an $O(m)$
algorithm for the case of general $d$. Their main interest was in edge
coloring of general bipartite graphs, where finding perfect matchings in
regular bipartite graphs is an important
subroutine.
Very recently, Goel, Kapralov, and Khanna~\cite{gkk:rbp08}, gave a
sampling-based algorithm that computes a perfect matching in $d$-regular
bipartite graphs in $O(\min\{m, \frac{n^{2.5}\ln n}{d}\})$ expected time, an
expression that is bounded by $\tilde{O}(n^{1.75})$.  The algorithm
of~\cite{gkk:rbp08} uses uniform sampling to reduce the number of edges in the
input graph while preserving a perfect matching, and then runs the
Hopcroft-Karp algorithm on the sampled graph.

\paragraph{Our Results and Techniques:}
We present a significantly faster algorithm for finding perfect matchings in
regular bipartite graphs.
\begin{theorem}
There is an $O\left(\min\{m, \frac{n^2\ln^3 n}{d}\}\right)$ expected time
algorithm to find a perfect matching in a $d$-regular bipartite graph $G$.
\end{theorem}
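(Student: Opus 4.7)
The plan is a two-stage sampling procedure that shrinks the input $d$-regular bipartite graph $G$ to a sparse subgraph $H$ on $O(n\log n)$ edges which, with high probability, still contains a perfect matching of $G$; the matching is then extracted by Hopcroft--Karp on $H$, whose running time is bounded by a sharper analysis exploiting structural properties inherited from $G$. The $\min$ in the theorem combines this route (dominant when $d$ is large) with the classical $O(m)$ Cole--Ost--Schirra algorithm (dominant when $d$ is small). In Stage 1 I would sample each edge of $G$ independently with probability $p_1 = \tilde{\Theta}(n/d^2)$, implemented via geometric-skip over adjacency lists so that only selected edges are actually inspected; this runs in expected time $\tilde{O}(n^2/d)$ and yields a graph $G_1$ with $\tilde{O}(n^2/d)$ edges whose expected per-vertex degree is $\tilde{\Theta}(n/d)$, large enough for Chernoff-style concentration. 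To show that $G_1$ still admits a perfect matching, I would invoke the correspondence theorem between Hall's witnesses and cuts promised in the abstract: a missing perfect matching in $G_1$ forces an ``undersized'' cut-like structure whose survival probability is bounded by Chernoff plus a union bound over a restricted witness family, the size of which is controlled by the cut correspondence rather than by naive enumeration of all $2^n$ subsets.

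\textbf{Stage 2 (non-uniform sampling, linear time on $G_1$).} On $G_1$ I would apply a second, non-uniform sampling step with edge probabilities tuned to local structural data of $G_1$ (for instance, current vertex degrees and the local expansion already inherited from the regularity of $G$), producing the final graph $H$ with $O(n\log n)$ edges that still retains a perfect matching with high probability. Correctness again reduces to combining the Hall-witness/cut correspondence with concentration: for each candidate witness $S$, the expected number of surviving edges suffices to maintain $|N(S)| \geq |S|$, and the restricted witness family makes the union bound feasible. The delicate point is balancing the $O(n\log n)$ edge budget against the per-witness survival probability; tuning the non-uniform distribution is where the correspondence theorem appears essential, since it converts a combinatorial enumeration of witnesses into a cleaner cut-counting argument.

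\textbf{Stage 3 (sharper Hopcroft--Karp).} The textbook $O(|E(H)|\sqrt{n})$ bound on $H$ only yields $\tilde{O}(n^{1.5})$. To sharpen this to $\tilde{O}(n^2/d)$ I would argue that, because $H$ inherits strong expansion from the $d$-regularity of $G$ through both sampling stages, the Hopcroft--Karp phase count shrinks from the worst-case $\sqrt{n}$ to $\tilde{O}(n/d)$: intuitively, any near-perfect matching in a graph as dense as $G$ admits many short augmenting paths, and this property is preserved by the two-stage sampling. Each phase is a BFS/DFS on $H$ of cost $\tilde{O}(n)$, giving total $\tilde{O}(n \cdot n/d) = \tilde{O}(n^2/d)$, and combining with the $O(m)$ fallback yields the claimed expected bound. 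I expect the main obstacle throughout to be proving the Hall-witness/cut correspondence in a sharp enough form that all three of its uses --- the Stage 1 survival bound, the Stage 2 survival bound, and the Stage 3 phase-count bound --- simultaneously go through with the needed polylogarithmic slack.
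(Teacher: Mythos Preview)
Your high-level architecture matches the paper exactly: two-stage sampling (uniform at rate $\tilde{\Theta}(n/d^2)$, then non-uniform down to $O(n\log n)$ edges), correctness via a witness-to-cut correspondence, and a sharpened Hopcroft--Karp phase bound of $\tilde{O}(n/d)$. The fallback to Cole--Ost--Schirra for the $O(m)$ branch is also what the paper does.

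The genuine gap is in Stage~2. Your proposal to sample with probabilities ``tuned to local structural data \ldots\ for instance, current vertex degrees'' will not work, and this is not a detail that can be patched by tuning. After Stage~1, $G_1$ is very nearly regular (every vertex has degree $\tilde{\Theta}(n/d)$ with concentration), so degree-based sampling is essentially uniform sampling on $G_1$; to reach $O(n\log n)$ edges you would need rate $\tilde{\Theta}(d/n)$, leaving each vertex with expected degree $\tilde{\Theta}(1)$ and hence isolated vertices with constant probability. The paper's Stage~2 is specifically Bencz\'{u}r--Karger sampling: each edge $e$ of $G_1$ is kept with probability $\min\{1, c\ln n / s_e\}$ where $s_e$ is the \emph{edge strength} (the largest $k$ such that $e$ lies in a $k$-edge-connected subgraph). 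The crucial fact is $\sum_e 1/s_e = O(n)$ regardless of $|E(G_1)|$, which is what buys the $O(n\log n)$ edge budget; nothing degree-based gives this.

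Correspondingly, the ``correspondence theorem'' you invoke as a black box has to be strong enough to interface with strength-based sampling, and its form is not obvious. The paper's version (the \emph{proportionate uncrossing theorem}) says: if $\coll$ is any collection of pairs $(A,B)$ each satisfying $t(W(A,B)\cap E_R) > \gamma\, t(C(A,B))$ for a weight function $t$, then there is another such collection $\colltwo$ in which (i) every $(A,B)\in\coll$ has a representative $(A',B')\in\colltwo$ with $W(A',B')\subseteq W(A,B)$, and (ii) at most \emph{two} pairs in $\colltwo$ share any given cut. Half-injectivity is tight (an explicit example shows you cannot get injectivity), and it is exactly what lets the Bencz\'{u}r--Karger union-bound-over-cuts argument carry over to witness sets. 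The proof proceeds by an inductive ``Venn uncrossing'' of all pairs sharing a maximal cut; it is the paper's main technical contribution and is not something you can expect to recover from Chernoff plus a generic witness-counting argument. Your Stage~3 intuition is correct, but note that the phase bound also requires a second variant of the uncrossing machinery (a ``combinatorial uncrossing'') together with a decomposition of $G$ into $O(n/d)$ pieces each of min-cut $\Omega(d)$, which is stronger than the decomposition used for correctness.
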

As a function of $n$ alone, the running time stated above is $O((n\ln
n)^{1.5})$. Since the $O(m)$ running time is guaranteed by the algorithm of
Cole, Ost, and Schirra, we are only concerned with the case where $d$ is
$\Omega(\sqrt{n} \ln n)$. For this regime, our algorithm reduces the perfect
matching problem on a regular bipartite graph $G$ to the same problem on a
(not necessarily regular) sparse bipartite graph $H$ with $O(n\ln n)$
edges. This reduction takes time $O(\frac{n^2\ln^3 n}{d})$. We then use the
Hopcroft-Karp algorithm on $H$ to recover a perfect matching. A black-box use
of the analysis of the Hopcroft-Karp algorithm would suggest a running time of
$O(\frac{n^2\ln^3 n}{d} + n^{1.5}\ln n)$. However, we show that the final
sampled graph has some special structure that guarantees that the
Hopcroft-Karp algorithm would complete in time $O(\frac{n^2\ln^2 n}{d})$ whp.

For every pair $A\subseteq P, B\subseteq Q$, we define a {\em witness set}
$W(A,B)$ to be the set of all edges going from $A$ to $Q\setminus B$. Of
particular interest are what we call {\em Hall witness sets}, which correspond
to $|A| > |B|$; the well-known Hall's theorem~\cite{b:graphtheory} says that a
bipartite graph $H(P,Q,E_H)$ contains a perfect matching iff $E_H$ includes an
edge from each Hall witness set.
Thus any approach that reduces the size of the input bipartite graph by
sampling must ensure that some edge from every Hall witness set is included in
the sampled graph; otherwise the sampled graph no longer contains a perfect
matching.  Goel, Kapralov, and Khanna~\cite{gkk:rbp08} showed that no {\em
  uniform sampling} scheme on a $d$-regular bipartite graph can reduce the
number of edges to $o(\frac{n^2}{d \ln n})$ while preserving a perfect
matching, and hence their $\tilde{O}(n^{1.75})$-time algorithm is the best
possible running time achievable via uniform sampling followed by a black-box
invocation of the Hopcroft-Karp analysis.

In order to get past this barrier, we use here a two-stage sampling
process. The first stage is a uniform sampling (along the lines
of~\cite{gkk:rbp08}) which generates a reduced-size graph $G'=(P,Q,E')$ that
preserves not only a perfect matching but also a key relationship between the
sizes of ``relevant'' witness sets and cuts in the graph $G$. The second stage
is to run the non-uniform Bencz\'{u}r-Karger sampling scheme
~\cite{benczurkarger96} on $G'$ to generate a graph $G''$ with $\tilde{O}(n)$
edges while preserving a perfect matching w.h.p. Since this step requires
$\tilde\Omega(|E'|)$ time, we crucially rely on the fact that $G'$ does not
contain too many edges.

While our algorithm is easy to state and understand, the proof of correctness
is quite involved. The Bencz\'{u}r-Karger sampling was developed to generate, for
any graph, a weighted subgraph with $\tilde{O}(n)$ edges that approximately
preserves the size of all cuts in the original graph. The central idea
underlying our result is to show that there exists a collection of {\em core}
witness sets that can be identified in an almost one-one manner with cuts in
the graph such that the probability mass of edges in each witness set is
comparable to the probability mass of the edges in the cut identified with
it. Further, every witness set in the graph has a ``representative'' in this
collection of core witness sets. Informally, this allows us to employ
cut-preserving sampling schemes such as Bencz\'{u}r-Karger as
``witness-preserving'' schemes. We note here that the natural mapping which
assigns the witness set of a pair $(A,B)$ to the cut edges associated with
this pair can map arbitrarily many witness sets to the same cut and is not
useful for our purposes. One of our contributions is an uncrossing
theorem for witness sets, that we refer to as the {\em proportionate
  uncrossing theorem}.  Informally speaking, it says that given any collection
of witness sets $\coll$ such that the probability mass of each witness set is
comparable to that of its associated cut, there exists another collection
$\colltwo$ of witness sets such that (i) the natural mapping to cuts as
defined above is {\em half-injective} for $\colltwo$, that is, at most two
witness sets in $\colltwo$ map to any given cut, (ii) the probability mass of
each witness set is comparable to the probability mass of its associated cut,
and (iii) any subset of edges that hits every witness set in $\colltwo$ also
hits every witness set in $\coll$.  The collection $\colltwo$ is referred to
as a proportional uncrossing of $\coll$. As shown in
Figure~\ref{fig:cross}(a), we can not achieve an injective mapping, and hence
the half-injectivity is unavoidable.

We believe the half-injective correspondence between witness sets and cuts, as
facilitated by the proportionate uncrossing theorem, is of independent
interest, and will perhaps have other applications in this space of
problems. We also emphasize here that the uncrossing theorem holds for all
bipartite graphs, and not only regular bipartite graphs. Indeed, the graph
$G'$ on which we invoke this theorem does not inherit the regularity property
of the original graph $G$.  As another illustrative example, consider the
celebrated Birkhoff-von Neumann theorem~\cite{b:graphtheory,vn:bvn53} which
says that every doubly stochastic matrix can be expressed as a convex
combination of permutation matrices (i.e., perfect matchings).  In some
applications, it is of interest to do an iterative decomposition whereby a
single matching is recovered in each iteration. The best-known bound for this
problem, to our knowledge, is an $O(mb)$ time algorithm that follows from the
work of Gabow and Kariv~\cite{gk:edge1982}; here $b$ denotes the maximum
number of bits needed to express any entry in $M$. The following theorem is an
easy consequence of our proportionate uncrossing result.

\begin{theorem} \label{thm:bvn}
  Given an $n \times n$ doubly-stochastic matrix $M$ with $m$ non-zero
  entries, one can find a perfect matching in the support of $M$ in
  $\tilde{O}( m + n^{1.5})$ expected time.
\end{theorem}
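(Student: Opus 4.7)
The plan is to interpret $M$ as a weighted bipartite graph $G=(P,Q,E)$ on $2n$ vertices, placing an edge of weight $M_{uv}$ between $u\in P$ and $v\in Q$ exactly when $M_{uv}>0$. Since $M$ is doubly stochastic, every vertex has weighted degree exactly $1$, and the Birkhoff--von Neumann theorem guarantees a perfect matching in the support. The first step is to sparsify $G$ by running the weighted Bencz\'{u}r--Karger sampling scheme, which in $\tilde{O}(m)$ time produces a subgraph $H$ with $O(n\ln n)$ edges whose cut values approximate those of $G$ to within a constant factor with high probability.

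The key step is to show that $H$ still admits a perfect matching, and for this I would invoke the proportionate uncrossing theorem together with the core-witness construction alluded to in the introduction. Let $\coll$ be a collection of core Hall witness sets of $G$ such that every Hall witness set of $G$ has a representative in $\coll$ and, under the Bencz\'{u}r--Karger sampling probabilities $p_e$, the mass $\sum_{e \in W} p_e$ of each $W \in \coll$ is comparable to the mass of its associated cut. Because every vertex of $G$ has weighted degree $1$, every cut of $G$ has weight at least $1$, so the Bencz\'{u}r--Karger guarantee forces $\sum_{e \in C} p_e = \Omega(\ln n)$ for every cut $C$. The proportionate uncrossing theorem then upgrades $\coll$ to a family $\colltwo$ whose witness-to-cut map is half-injective, each $W \in \colltwo$ has mass $\Omega(\ln n)$, and any edge set hitting every $W \in \colltwo$ also hits every $W \in \coll$ and hence every Hall witness set of $G$. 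A union bound using the $\Omega(\ln n)$ mass lower bound yields $H \cap W \neq \emptyset$ for every Hall witness set with high probability, so by Hall's theorem $H$ supports a perfect matching.

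The final step is to run Hopcroft--Karp on $H$; with $|E(H)| = O(n\ln n)$ this costs $O(n^{1.5}\ln n)$, yielding the overall bound $\tilde{O}(m+n^{1.5})$. The principal obstacle I expect is the construction of the core witness family $\coll$ and the verification of the proportionality precondition required by the uncrossing theorem in this weighted, non-regular setting---in particular, we must show that a collection that simultaneously (i) covers all Hall witnesses and (ii) has each member's mass tied to an associated cut's mass under the Bencz\'{u}r--Karger probabilities actually exists. This is exactly where the doubly stochastic structure (weighted degrees equal to $1$, hence cut weights bounded below by $1$) should enter, by providing the uniform lower bound that calibrates witness masses against cut masses; once it is in hand, the remainder is a mechanical combination of Bencz\'{u}r--Karger cut preservation, the uncrossing theorem, Hall's theorem, and the Hopcroft--Karp running-time bound.
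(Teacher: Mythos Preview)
Your overall plan---build the weighted bipartite graph from $M$, run weighted Bencz\'{u}r--Karger sampling, argue via proportionate uncrossing that a perfect matching survives, then run Hopcroft--Karp---is exactly the paper's approach. The step you flag as the ``principal obstacle,'' however, is not one: there is no need to construct a special core family, and thickness should be checked against the \emph{original} weights $w(u_i,v_j)=M_{ij}$, not against the sampling probabilities $p_e$.

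Concretely, take $\coll$ to be the collection of \emph{all} Hall pairs $(A,B)$ with $A\subseteq P$, $B\subseteq Q$, $|A|>|B|$, and set $E_R=E$. Because $M$ is doubly stochastic, every vertex has $w$-degree exactly $1$; writing $w(A,B)$ for the total weight of edges between $A$ and $B$, one has $w(W(A,B))=|A|-w(A,B)$ and $w(C(A,B))=|A|+|B|-2\,w(A,B)$, so $|A|>|B|$ gives $w(W(A,B))>\tfrac12\,w(C(A,B))$ immediately. Thus the entire collection $\coll$ is $(1/2)$-thick with respect to $(G,w,E)$, the uncrossing theorem applies as stated, and the half-injective $\colltwo$ it produces inherits the same thickness. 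The passage from $w$-thickness of $\colltwo$ to the statement ``Bencz\'{u}r--Karger sampling hits every witness set in $W(\colltwo)$'' is then the same strength-level argument as in the paper's main correctness proof, now in the weighted setting; the paper itself leaves that step implicit. Your detour through $p_e$-masses and a hand-built core family would have to pass through this $w$-computation anyway, so it adds nothing.
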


The proof of this theorem and a discussion of known results about this problem are given in section ~\ref{sec:bvn}. Though this
result itself represents only a modest improvement over the earlier $O(mb)$
running time, it is an instructive illustration of the utility of the
proportionate uncrossing theorem.

It is worth noting that while the analysis of Goel, Kapralov, and Khanna was
along broadly similar lines (sample edges from the original graph, followed by
running the Hopcroft-Karp algorithm), the proportionate uncrossing theorem
developed in this paper requires significant new ideas and is crucial to
incorporating the non-uniform sampling stage into our algorithm. Further, the
running time of the Hopcroft-Karp algorithm is easily seen to be
$\Omega(m\sqrt{n})$ even for the 2-regular graph consisting of
$\Theta(\sqrt{n})$ disjoint cycles of lengths $2, 4, \ldots, \sqrt{n}$
respectively; the stronger analysis for our special case requires both our
uncrossing theorem as well as a stronger decomposition\footnote{It is known
  that the Hopcroft-Karp algorithm terminates quickly on bipartite
  expanders~\cite{motwani}, but those techniques don't help in our setting
  since we start with an arbitrary regular bipartite graph.}. As a step in
this analysis, we prove the independently interesting fact that after sampling
edges from a $d$-regular bipartite graph with rate $c \ln n\over d$, for some
suitable constant $c$, we obtain a graph that has a matching of size
$n-O(n/d)$ whp and such a matching can be found in $O(n/d)$ augmenting phases
of the Hopcroft-Karp algorithm whp.

\noindent
\paragraph{Organization:}
Section~\ref{sec:prelim} reviews and presents some useful corollaries of
relevant earlier work. In Section~\ref{sec:proportion}, we establish the
proportionate uncrossing theorem. In section~\ref{sec:algo}, we
present and analyze our two-stage sampling scheme, and
section~\ref{sec:improved-runtime} outlines the stronger analysis of the
Hopcroft-Karp algorithm for our special case. Section~\ref{sec:bvn} contains the proof of Theorem \ref{thm:bvn}
and a discussion of known results on finding perfect matchings in the support of double stochastic matrices.

\section{Preliminaries}
\label{sec:prelim}

In this section, we adapt and present recent results of Goel, Kapralov, and
Khanna~\cite{gkk:rbp08} as well as the Bencz\'{u}r-Karger sampling
theorem~\cite{benczurkarger96} for our purposes, and also prove a simple
technical lemma for later use.
\subsection{Bipartite Decompositions and Relevant Witness Pairs}
\label{sec:define}
Let $G=(P,Q,E)$ be a regular bipartite graph, with vertex set $P\cup Q$ and
edge set $E \subseteq P \times Q$. Consider any partition of $P$ into $k$ sets
$P_1, P_2, \ldots, P_k$, and a partition of $Q$ into $Q_1, Q_2, \ldots,
Q_k$. Let $G_i$ denote the (not necessarily regular) bipartite graph $(P_i,
Q_i, E_i)$ where $E_i = E \cap (P_i \times Q_i)$. We will call this a
``decomposition'' of $G$. 

Given $A\subseteq P$ and $B\subseteq Q$, define the witness set corresponding
to the pair $(A,B)$, denoted $W(A,B)$, as the set of all edges between $A$ and
$Q\setminus B$, and define the cut $C(A,B)$ as the set of all edges between $A
\cup B$ and $(P\setminus A) \cup (Q\setminus B)$. The rest of the definitions
in this section are with respect to some arbitrary but fixed decomposition of
$G$.

\begin{definition}
  An edge $(u,v)\in E$ is relevant if $(u,v)\in E_i$ for some $i$.
\end{definition}

\begin{definition}
  Let $E_R$ be the set of all relevant edges. A pair $(A,B)$ is said to be
  relevant if
  \begin{enumerate}
  \item $A\subseteq P_i$ and $B\subseteq Q_i$ for some $i$,
  \item $|A| > |B|$, and
  \item There does not exist another $A'\in P_i$, $B'\in Q_i$, such that
    $A'\subset A$, $|A'| > |B'|$, and $W(A',B') \cap E_R \subseteq W(A,B) \cap
    E_R$.
  \end{enumerate}
\end{definition}

Informally, a relevant pair is one which is contained completely within a
single piece in the decomposition, and is ``minimal'' with respect to that
piece. The following lemma is implicit in~\cite{gkk:rbp08} and is proved in
appendix~\ref{append:relevant-only} for completeness.
\begin{lemma}
\label{lem:relevant-only}
  Let $\coll$ denote all relevant pairs $(A,B)$ with respect to a
  decomposition of $G(P,Q,E)$, and let $E_R$ denote all relevant edges. Consider any
  graph $G^*=(P,Q,E^*)$. If for all $(A,B) \in \coll$, we have $W(A,B) \cap
  E^* \cap E_R \neq \phi$, then $G^*$ has a perfect matching.
\end{lemma}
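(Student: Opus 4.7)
The plan is to prove the contrapositive: assuming $G^*$ has no perfect matching, I will exhibit a relevant pair $(A^*,B^*) \in \coll$ whose relevant witness edges entirely miss $E^*$, contradicting the hypothesis. By Hall's theorem, the assumption gives a pair $(A,B)$ with $A\subseteq P$, $B\subseteq Q$, $|A|>|B|$, and $W(A,B)\cap E^* = \emptyset$. This pair need not lie inside any single piece of the decomposition, so the goal is to ``localize'' it to some $G_{i^*}$ and then ``shrink'' it until it becomes relevant, while ensuring each step only makes the witness set smaller.

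First I would localize. Set $A_i = A\cap P_i$ and $B_i = B\cap Q_i$. Summing over pieces, $\sum_i |A_i| = |A| > |B| = \sum_i |B_i|$, so there exists $i^*$ with $|A_{i^*}| > |B_{i^*}|$. The pair $(A_{i^*},B_{i^*})$ satisfies conditions (1) and (2) of relevance. Moreover, any edge $(u,v) \in W(A_{i^*},B_{i^*})$ has $u\in A_{i^*}\subseteq A$ and $v\in Q_{i^*}\setminus B_{i^*}$; since $v\in Q_{i^*}$, membership of $v$ in $B$ would force $v \in B\cap Q_{i^*} = B_{i^*}$, which is false. Hence $v\in Q\setminus B$, showing $W(A_{i^*},B_{i^*}) \subseteq W(A,B)$.

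Next I would apply a minimality descent to obtain condition (3). If $(A_{i^*},B_{i^*})$ already satisfies (3), it is relevant and I set $(A^*,B^*) = (A_{i^*},B_{i^*})$. Otherwise, condition (3) fails, so there exists $(A',B')$ with $A' \subset A_{i^*}$, $B' \subseteq Q_{i^*}$, $|A'|>|B'|$, and $W(A',B')\cap E_R \subseteq W(A_{i^*},B_{i^*}) \cap E_R$. Replace $(A_{i^*},B_{i^*})$ by $(A',B')$ and iterate. Because $|A'|$ drops strictly and is a nonnegative integer, this process terminates at a relevant pair $(A^*,B^*) \in \coll$ contained in $P_{i^*} \times Q_{i^*}$, with the chain of inclusions giving $W(A^*,B^*)\cap E_R \subseteq W(A_{i^*},B_{i^*})\cap E_R$. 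Combining with the previous paragraph,
\[
W(A^*,B^*)\cap E_R \cap E^* \;\subseteq\; W(A_{i^*},B_{i^*})\cap E^* \;\subseteq\; W(A,B)\cap E^* \;=\; \emptyset,
\]
which directly contradicts the hypothesis that every relevant pair has a relevant witness edge surviving in $E^*$.

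Honestly, this lemma is essentially definition-chasing and I do not expect a serious obstacle; the only subtle point is that the strict inequality $|A|>|B|$ is preserved both under the decomposition step (pigeonhole over pieces) and under the minimality descent (condition (3) explicitly requires $|A'|>|B'|$). A small sanity check to include is that the descent's witness inclusion composes, i.e., applying condition (3) twice still yields containment in the original $W(A_{i^*},B_{i^*})\cap E_R$; this is immediate from transitivity of subset inclusion, but is the reason condition (3) is phrased in terms of $E_R$ rather than the raw witness set.
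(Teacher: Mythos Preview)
Your approach is essentially the paper's (the paper argues directly rather than by contrapositive, but the core idea---localize to a piece by pigeonhole, then descend to a minimal pair---is identical). There is, however, one genuine slip in the localization step.

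The claimed inclusion $W(A_{i^*},B_{i^*}) \subseteq W(A,B)$ is false as stated. By definition, $W(A_{i^*},B_{i^*})$ consists of all edges of $G$ from $A_{i^*}$ to $Q\setminus B_{i^*}$, so for $(u,v)\in W(A_{i^*},B_{i^*})$ you only know $v\in Q\setminus B_{i^*}$, not $v\in Q_{i^*}\setminus B_{i^*}$; the vertex $v$ could lie in $B\cap Q_j = B_j$ for some $j\neq i^*$, and hence in $B$. The fix, which the paper uses and which you yourself flag as ``the reason condition (3) is phrased in terms of $E_R$,'' is to intersect with $E_R$ here as well: if $(u,v)\in W(A_{i^*},B_{i^*})\cap E_R$, then relevance forces $v\in Q_{i^*}$ (since $u\in P_{i^*}$), and now your argument that $v\notin B$ goes through. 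Your final chain already carries $E_R$ on the left, so replacing the middle term by $W(A_{i^*},B_{i^*})\cap E_R\cap E^*$ repairs the proof.
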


\subsection{A Corollary of Bencz\'{u}r-Karger Sampling Scheme}

The Bencz\'{u}r-Karger sampling theorem~\cite{benczurkarger96} shows that for any
graph, a relatively small {\em non-uniform} edge sampling rate suffices to
ensure that every cut in the graph is hit by the sampled edges (i.e. it has a non-empty intersection)
with high probability. The sampling rate used for each edge $e$ inversely
depends on its strength, as defined below.

\label{sec:benczur-karger}
\begin{definition}~\cite{benczurkarger96} A $k$-strong component of a graph
  $H$ is a maximal vertex-induced subgraph of $H$ with edge-connectivity $k$.  The
  strength of an edge $e$ in a graph $H$ is the maximum value of $k$ such that
  a $k$-strong component contains $e$.
\end{definition}
\begin{definition}
  Given a graph $H=(V,E)$, let $H_{[j]} = (V,E_{[j]})$ denote the subgraph of $H$
  restricted to edges of strength $j$ or higher, where $j$ is some integer in
  $\{1, 2, \ldots, |V|\}$.
\end{definition}

It is easy to see that whenever a cut in a graph $H(V,E)$ contains an edge of strength $k$,
then the cut must contain at least $k$ edges. Furthermore, for any $1 < j \le |V|$, each connected
component of graph $H_{[j]}$ is contained inside some connected component of $H_{[j-1]}$.
The Bencz\'{u}r-Karger theorem utilizes these properties to show that it suffices to sample
each edge $e$ with probability $\Theta( \min\{ 1, \ln n / s_e \})$.

We now extend this sampling result to any collection of edge-sets for which there exists an injection
(one-one mapping) to cuts of comparable inverse strengths. The statement
of our theorem~\ref{thm:Benczur-Karger} closely mirrors the Bencz\'{u}r-Karger
sampling theorem, and the proof is also along the same general lines. However, the proof does not follow from the Bencz\'{u}r-Karger sampling theorem in a black-box fashion, so a proof is provided in appendix \ref{append:bk}
\begin{theorem}
\label{thm:Benczur-Karger}
Let $H(V,E)$ be any graph on $n$ vertices, and let ${\mathcal C}$ denote the set
of all possible edge cuts in $H$, and $\gamma \in (0,1]$ be a constant. Let
$H'$ be a subgraph of $H$ obtained by sampling each edge $e$ in $H$ with
probability
$ p_e = \min\left\{ 1, \frac{c \ln n}{\gamma s_e} \right\},$
where $s_e$ denotes the strength of edge $e$, and $c$ is a suitably large
constant. Further, let $\edgesets$ be a collection of subset of edges, and let $f$
be a one-one (not necessarily onto) mapping from $\edgesets$ to $\mathcal C$
satisfying $\sum_{e\in X} 1/s_e > \gamma\sum_{e\in f(X)} 1/s_e$ for all $X\in
\edgesets$. Then
$$ \sum_{ X \in \edgesets }  \Pr[ {\rm No~edge~in~}X~{\rm is~chosen~in~}H']
\leq
\frac{1}{n^2}.$$
\end{theorem}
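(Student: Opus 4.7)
My plan is to reduce the sum over $X \in \edgesets$ to a sum over cuts $C \in \mathcal{C}$ using the injectivity of $f$, and then bound the resulting cut sum by adapting the Benczúr-Karger strong-component argument. By independence of the edge samples and the inequality $1 - x \leq e^{-x}$,
\[
\Pr[\text{no edge of } X \text{ is sampled}] \;=\; \prod_{e \in X}(1 - p_e) \;\leq\; \exp\Bigl(-\sum_{e \in X} p_e\Bigr)
\]
for every $X \in \edgesets$. I discard any $X$ that contains an edge with $p_e = 1$ (the probability in question is then $0$), so I may assume $p_e = c \ln n /(\gamma s_e)$ for every edge of every remaining $X$. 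Plugging this in and then applying the hypothesis $\sum_{e \in X} 1/s_e > \gamma \sum_{e \in f(X)} 1/s_e$ yields
\[
\Pr[\text{no edge of } X \text{ is sampled}] \;\leq\; \exp\Bigl(-c \ln n \sum_{e \in f(X)} \frac{1}{s_e}\Bigr).
\]

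Because $f$ is one-to-one, distinct $X$'s produce distinct images $f(X) \in \mathcal{C}$, so summing and re-indexing by $C$ gives
\[
\sum_{X \in \edgesets} \Pr[\text{no edge of } X \text{ is sampled}] \;\leq\; \sum_{C \in \mathcal{C}} \exp\Bigl(-c \ln n \sum_{e \in C} \frac{1}{s_e}\Bigr),
\]
and it remains to bound this cut sum by $1/n^2$. I would do this by following the Benczúr-Karger strong-component hierarchy argument: the maximum-strength edge $e^*$ of any cut $C$ lies in a strong component $K^*$ of edge-connectivity $s_{e^*}$ that $C$ must split, hence $|C \cap K^*| \geq s_{e^*}$, and since every edge of $C$ has strength at most $s_{e^*}$ we obtain $\sum_{e \in C} 1/s_e \geq |C \cap K^*|/s_{e^*} \geq 1$, with the refinement $\sum_{e \in C} 1/s_e \geq \alpha$ whenever $C$ is $\alpha$ times the min-cut of its principal component. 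Combining this lower bound (which yields per-cut decay $n^{-c\alpha}$) with Karger's $n^{2\alpha}$ bound on the number of near-minimum cuts of each strong component, and summing the resulting geometric series over the $O(n)$ components in the hierarchy, yields the desired $1/n^2$ bound once $c$ is taken a sufficiently large constant.

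The main obstacle is this last counting step. While it closely follows the Benczúr-Karger blueprint, their published theorem is a two-sided concentration statement about all cut values, whereas here I need only the one-sided \emph{failure mass} bound $\sum_C \Pr[C \text{ is missed}] \leq 1/n^2$; isolating and reassembling the right pieces of their proof, rather than invoking the theorem as a black box, is the most delicate part of the argument. A secondary point to keep straight is that the strengths $s_e$ used in the hypothesis and in the sampling rates must consistently refer to the strengths of $e$ in the original graph $H$, not in any subsampled graph or sub-component, so that the two occurrences of $s_e$ in the reduction above line up correctly.
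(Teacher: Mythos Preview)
Your reduction to the cut sum $\sum_{C \in \mathcal{C}} \exp\bigl(-c \ln n \sum_{e \in C} 1/s_e\bigr)$ is exactly what the paper does, including the disposal of any $X$ containing an edge with $p_e = 1$ and the use of injectivity of $f$ to re-index over $\mathcal{C}$.

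Where you diverge is in bounding that sum, and your sketch of the strong-component hierarchy route has a counting gap. Assigning each cut $C$ of $H$ a principal component $K^*$ and then invoking Karger's $n^{2\alpha}$ bound on near-minimum cuts \emph{of $K^*$} does not bound the number of cuts \emph{of $H$} that land there: the map $C \mapsto C\cap K^*$ is many-to-one, since the partition of $V(H)\setminus V(K^*)$ is unconstrained. So the per-component geometric series is not summing over the cuts of $H$, and the ``sum over $O(n)$ components'' step does not close.

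The paper sidesteps the hierarchy entirely. It observes that $\tilde{\mu}(C) := c \ln n \sum_{e \in C} 1/s_e$ is itself the weight of the cut $C$ under the edge-weight function $e \mapsto c\ln n / s_e$, so one may apply the Karger--Stein cut-counting lemma \emph{once, directly to $H$ with these weights}. Your own observation (equivalently: every $e\in C$ has $s_e \le |C|$, since the $s_e$-strong component containing $e$ is split by $C$) already gives $\tilde{\mu}(C)\ge c\ln n$ for every cut. Sorting the cuts as $\tilde{\mu}_1 \le \tilde{\mu}_2 \le \cdots$, Karger--Stein yields $\tilde{\mu}_k \ge \frac{\ln k}{2\ln n}\,\tilde{\mu}_1 \ge \frac{c}{2}\ln k$, hence $e^{-\tilde{\mu}_k} \le k^{-c/2}$; the tail $\sum_{k>n^2} k^{-c/2}$ together with the first $n^2$ terms (each $\le n^{-c}$) is $O(n^{2-c}) \le 1/n^2$ for $c=5$. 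This single application of cut counting replaces the whole hierarchy bookkeeping and dissolves the ``main obstacle'' you flagged. Your secondary concern about consistent use of $s_e$ is a non-issue: the paper uses strengths in $H$ throughout, exactly as you do.
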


The result below from~\cite{benczurkarger96} bounds the number of edges chosen
by the sampling in Theorem~\ref{thm:Benczur-Karger}.
\begin{theorem}
\label{thm:Benczur-Karger2}
Let $H(V,E)$ be any graph on $n$ vertices, and let $H'$ be a subgraph of $H$ obtained by
sampling each edge $e$ in $H$ with probability
$ p_e = \min\left\{ 1, \frac{c \ln n}{s_e} \right\},$
where $s_e$ denotes the strength of edge $e$, and $c$ is any constant. Then
with probability at least $1 - \frac{1}{n^2}$, the graph $H'$ contains at most
$c' n \ln n$ edges, where $c'$ is another suitably large constant.
\end{theorem}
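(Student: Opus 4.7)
The proof decomposes into a structural bound and a concentration argument. The plan is to first show that the expected number of sampled edges is $O(n \ln n)$ using a classical inequality about strengths, and then to boost this expectation bound to a high-probability bound via Chernoff.

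The first step is to invoke the Nash-Williams / Benczúr-Karger structural inequality: for any graph $H = (V, E)$ on $n$ vertices,
\[
\sum_{e \in E} \frac{1}{s_e} \;\leq\; n - 1.
\]
This is the core lemma underlying the Benczúr-Karger analysis, and it follows by iteratively peeling off the maximal $k$-strong components in order of decreasing strength; each peel contributes at most (number of vertices) $-$ (number of components) to the running sum, and telescoping gives the $n-1$ bound. I would either cite this from the Benczúr-Karger paper or give the short inductive proof.

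Second, I would split $E = E_1 \cup E_2$, where $E_1 = \{e : p_e = 1\}$ consists of edges with $s_e \leq c \ln n$ and $E_2 = E \setminus E_1$. For $e \in E_1$, we have $1/s_e \geq 1/(c \ln n)$, so $|E_1| \leq c \ln n \sum_{e \in E_1} 1/s_e$. Combining with the sampling probabilities on $E_2$:
\[
\mathbb{E}[|E(H')|] \;=\; |E_1| + \sum_{e \in E_2} \frac{c \ln n}{s_e} \;\leq\; c \ln n \sum_{e \in E} \frac{1}{s_e} \;\leq\; c(n-1) \ln n.
\]

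Third, I would boost this expectation bound to a high-probability bound. The edges in $E_1$ contribute deterministically, and the edges in $E_2$ are sampled independently with probabilities $p_e < 1$, so $|E(H')| - |E_1|$ is a sum of independent Bernoullis with mean at most $c(n-1)\ln n$. Applying the standard Chernoff tail bound $\Pr[X > t] \leq 2^{-t}$ whenever $t > 2e \mathbb{E}[X]$ (or equivalently the multiplicative form with $\delta = \Theta(1)$), we get $\Pr[|E(H')| > c' n \ln n] \leq 2^{-c' n \ln n} \leq 1/n^2$ for a sufficiently large constant $c' > 2ec$.

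The main obstacle is the structural inequality $\sum_e 1/s_e \leq n-1$; everything downstream is a routine expectation calculation plus a Chernoff application. Since this structural bound is stated and proved in Benczúr-Karger, I would cite it rather than reprove it, and focus the write-up on the two-line split into $E_1, E_2$ and the Chernoff estimate.
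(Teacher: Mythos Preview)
Your proof is correct and is exactly the standard Bencz\'{u}r--Karger argument. Note, however, that the paper does not actually prove this theorem: it simply cites the result from~\cite{benczurkarger96} (see the sentence introducing Theorem~\ref{thm:Benczur-Karger2}), so there is no in-paper proof to compare against---your write-up faithfully reconstructs the argument from the cited source.
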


We conclude with a simple property of integer multisets that we will use
later. A similar statement was used in \cite{karger-levine} (lemma 4.5). A proof is provided in appendix~\ref{append:strength} for completeness.
\begin{lemma}
\label{lem:strength}
Let $S_1$ and $S_2$ be two arbitrary multisets of positive integers such that $|S_1| > \gamma |S_2|$ for some $\gamma > 0$.
Then there exists an integer $j$ such that
$$ \sum_{i \ge j~ {\rm and}~ i \in S_1} {\frac{1}{i}} > \gamma \left( \sum_{i \ge j~ {\rm and}~ i \in S_2} {\frac{1}{i}} \right).$$
\end{lemma}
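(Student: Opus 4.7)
\medskip

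\noindent\textbf{Proof plan.} The plan is to reduce the statement to a contradiction argument via a simple double-counting identity that relates the cardinality of a multiset to a sum over all the tail sums $\sum_{i\ge j,\, i\in S}\frac{1}{i}$. For a multiset $S$ of positive integers, let $T_S(j) := \sum_{i\ge j,\, i\in S}\frac{1}{i}$. The crucial observation is the telescoping-flavored identity
\[
\sum_{j\ge 1} T_S(j)\;=\;\sum_{i\in S}\sum_{j=1}^{i}\frac{1}{i}\;=\;\sum_{i\in S} 1\;=\;|S|,
\]
obtained by swapping the order of summation and using $\sum_{j=1}^{i}\frac{1}{i}=1$. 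Both sums over $j$ are finite since the multisets are finite (so $T_S(j)=0$ for $j$ larger than the maximum element of $S$).

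Armed with this identity, I would argue by contradiction. Suppose the conclusion fails, i.e.\ for every positive integer $j$,
\[
T_{S_1}(j)\;\le\;\gamma\, T_{S_2}(j).
\]
Summing this inequality over all $j\ge 1$ and applying the identity above to each side yields
\[
|S_1|\;=\;\sum_{j\ge 1} T_{S_1}(j)\;\le\;\gamma\sum_{j\ge 1} T_{S_2}(j)\;=\;\gamma\,|S_2|,
\]
which directly contradicts the hypothesis $|S_1|>\gamma\,|S_2|$. Hence some $j$ must satisfy the desired strict inequality.

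I do not expect any serious obstacle here; the entire content of the argument is noticing the right identity, after which the contradiction is immediate. The only mild care needed is confirming that the two sums converge (which is automatic from finiteness of the multisets) and that the strict inequality in the hypothesis transfers to a strict inequality for at least one $j$ (which follows because if every $T_{S_1}(j)\le \gamma T_{S_2}(j)$, summing preserves the non-strict inequality while the hypothesis is strict).
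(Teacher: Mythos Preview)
Your proof is correct and follows essentially the same approach as the paper: both argue by contradiction, sum the assumed inequalities $T_{S_1}(j)\le\gamma\,T_{S_2}(j)$ over all $j\ge 1$, and use the double-counting identity $\sum_{j\ge 1}T_S(j)=|S|$ (the paper obtains this identity implicitly by summing $\sum_{i=j}^{K}\alpha_i/i$ over $j=1,\ldots,K$) to reach the contradiction $|S_1|\le\gamma|S_2|$.
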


\section{Proportionate Uncrossing of Witness Sets}
\label{sec:proportion}
Consider a bipartite graph $G=(P,Q,E)$, with a non-negative weight function
$t$ defined on the edges. Assume further that we are given a set of ``relevant
edges'' $E_R \subseteq E$. We can extend the definition of $t$ to sets of
edges, so that $t(S) = \sum_{e\in S}t(e)$, where $S\subseteq E$.

\begin{definition}
\label{def:gamma_thick}
For any $\gamma >
0$ and $A\subseteq P, B\subseteq Q$, the pair $(A,B)$ is said to be
$\gamma$-thick with respect to $(G,t,E_R)$ if $t(W(A,B)\cap E_R) > \gamma
t(C(A,B))$, {\em i.e.}, the total weight of the relevant edges in $W(A,B)$ is {\em
  strictly} more than $\gamma$ times the total weight of $C(A,B)$. A set of
pairs $\coll = \{(A_1, B_1), (A_2, B_2), \ldots, (A_K,B_K)\}$ where each $A_i
\subseteq P$ and each $B_i \subseteq Q$ is said to be a $\gamma$-thick
collection with respect to $(G,t,E_R)$ if every pair $(A_i,B_i) \in \coll$ is
$\gamma$-thick.
\end{definition}

The quantities $G,t,$ and $E_R$ will be fixed for this section, and
for brevity, we will omit the phrase ``with respect to $(G,t,E_R)$'' in the
rest of this section.

Before defining proportionate uncrossings of witness sets, we will informally
point out the motivation for doing so. If a pair $(A,B)$ is $\gamma$-thick for
some constant $\gamma$, and if we know that a sampling process where edge $e$
is chosen with probability $t$ chooses some edge from $C(A,B)$ with high
probability, then increasing the sampling probability by a factor of
$1/\gamma$ should result in some relevant edge from $W(A,B)$ being chosen with
high probability as well, a fact that would be very useful in the rest of this
paper. The sampling sub-routines that we employ in the rest of this paper are
analyzed by using union-bound over all cuts, and in order to apply the same
union bound, it would be useful if each witness set were to correspond to a
unique cut. However, in figure~\ref{fig:cross}(a), we show two pairs $(A,B)$
and $(X,Y)$ which are both $(1/2)$-thick but correspond to the same cut; we
call this a ``crossing'' of the pairs $(A,B)$ and $(X,Y)$, drawing intuition
from the figure. In general, we can have many witness sets that map to the
same cut. We would like to ``uncross'' these witness sets by finding subsets
of each witness set that map to unique cuts, but there is no way to uncross
figure~\ref{fig:cross}(a) in this fashion. Fortunately, and somewhat
surprisingly, this is the worst case: any collection of $\gamma$-thick pairs
can be uncrossed into another collection such that all the pairs in the new
collection are also $\gamma$-thick (hence the term proportionate uncrossing),
every original witness set has a representative in this new collection, and no
more than two new pairs have the same cut. Figure~\ref{fig:cross}(b) shows two
$\frac{1}{2}$-thick pairs that can be uncrossed using a single $\frac{1}{2}$-thick
representative, $(A\cap X, B\cap Y)$. We will spend the rest of this section
formalizing the notion of proportionate uncrossings and proving their
existence. The uncrossing process is algorithmically inefficient, but we only
need to demonstrate existence for the purpose of this paper. The arguments in
this section represent the primary technical contribution of this paper; these
arguments apply to bipartite graphs in general (not necessarily regular), and
may be independently interesting. \vspace{-0.1in}
\begin{figure}[htbp]
  \centering
  \includegraphics[height=1.8in]{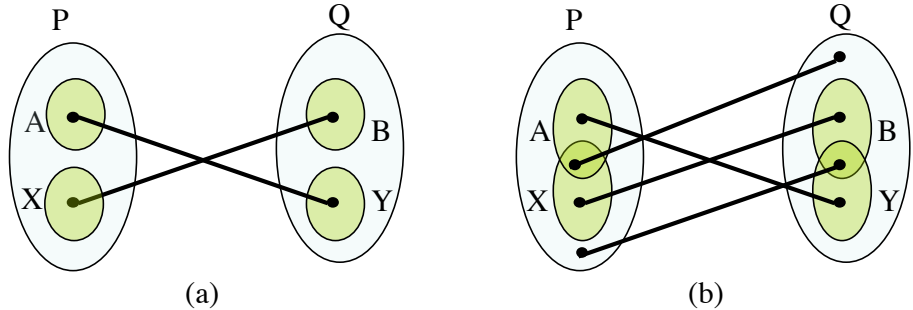}
  \vspace{-0.15in}
  \caption{ Both (a) and (b) depict two $\frac{1}{2}$-thick pairs $(A,B)$ and
    $(X,Y)$ that have different witness sets but the same cut (i.e. $W(A,B) \neq W(X,Y)$ but $C(A,B) = C(X,Y)$). The pairs in
    (a) can not be uncrossed, whereas the pairs in (b) can be uncrossed by
    choosing the single pair $(A\cap X, B\cap Y)$ as a representative.}
\label{fig:cross}
\end{figure}
\subsection{Proportionate Uncrossings: Definitions and Properties}
\label{sec:define-proportionate}
\begin{definition}
\label{define:uncross}
A $\gamma$-uncrossing of a $\gamma$-thick collection $\coll$ is another
$\gamma$-thick collection of pairs $\colltwo$ that satisfies the three
properties below:
\begin{description}
\item[P1:] For every pair $(A,B) \in \coll$ there exists a pair $(A',B') \in
  \colltwo$ such that $C(A',B') \subseteq C(A,B)$, and $W(A',B') \subseteq
  W(A,B)$. We will refer to $(A',B')$ as a representative of $(A,B)$.
\item[P2] For every $(A',B')\in \colltwo$, there exists $(A,B)\in\coll$ such
    that $C(A',B')\subseteq C(A,B)$.
  \item[P3:] {\em (Half-injectivity):} There can not be three distinct pairs
    $(A,B), (A',B'),$ and $(A'',B'')$ in $\colltwo$ such that $C(A, B) =
    C(A',B') = C(A'',B'')$.
\end{description}
\end{definition}

Since $\colltwo$ has the same (or larger) thickness as the thickness guarantee
that we had for $\coll$, it seems appropriate to refer to $\colltwo$ as a
proportionate uncrossing of $\coll$.

\begin{definition}
  \label{define:partial}
  A $\gamma$-partial-uncrossing of a $\gamma$-thick collection $\coll$ is
  another $\gamma$-thick collection of pairs $\colltwo$ which satisfies
  properties P1,P2 above but not necessarily P3.
\end{definition}
The following three lemmas follow immediately from the two definitions above,
and it will be useful to state them explicitly. Informally, the first says
that every collection is its own partial uncrossing, the second says that
uncrossings can be composed, and the third says that the union of the partial
uncrossings of two collections is a partial uncrossing of the union of the
collections.
\begin{lemma}
  \label{lem:self-uncross}
  If $\coll$ is a $\gamma$-thick collection, then $\coll$ is a
  $\gamma$-partial uncrossing of itself.
\end{lemma}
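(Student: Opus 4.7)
The plan is to observe that this statement is essentially a tautology designed to record the trivial base case of the uncrossing construction, and to dispatch it by simply exhibiting $\coll$ itself as the witnessing partial uncrossing. Concretely, I would set $\colltwo := \coll$ and verify, one by one, each of the three conditions that make up Definition \ref{define:partial}.

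First, $\colltwo$ is $\gamma$-thick because it equals $\coll$, which is $\gamma$-thick by hypothesis. Second, for property P1, given any pair $(A,B) \in \coll$, the same pair $(A,B)$ lies in $\colltwo$, and taking $(A',B') := (A,B)$ yields $C(A',B') = C(A,B) \subseteq C(A,B)$ and $W(A',B') = W(A,B) \subseteq W(A,B)$, so $(A,B)$ serves as its own representative. Third, for property P2, given any $(A',B') \in \colltwo = \coll$, the same pair serves as the required $(A,B) \in \coll$ with $C(A',B') \subseteq C(A',B')$.

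I would expect no genuine obstacle: the only thing to check is that the three conditions in Definition \ref{define:partial} are each reflexive (or satisfied by the identity selection), which they are because they are all formulated with $\subseteq$ rather than $\subsetneq$. The lemma is included, presumably, to be invoked in the inductive uncrossing argument that follows (together with Lemmas on composition and union of partial uncrossings), and so the proof should be little more than a line or two.
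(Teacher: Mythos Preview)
Your proposal is correct and matches the paper's treatment: the paper does not give a separate proof but simply notes that this lemma (along with Lemmas~\ref{lem:compose} and~\ref{lem:union}) ``follow[s] immediately from the two definitions above,'' which is exactly the reflexivity check you carry out.
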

\begin{lemma}
\label{lem:compose}
If $\collpart$ is a $\gamma$-partial uncrossing of a $\gamma$-thick collection
$\coll$, and $\colltwo$ is a $\gamma$-uncrossing of $\collpart$, then
$\colltwo$ is also a $\gamma$-uncrossing of $\coll$.
\end{lemma}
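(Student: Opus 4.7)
The plan is to verify each of the four components of the definition of a $\gamma$-uncrossing (namely $\gamma$-thickness together with properties P1, P2, P3) for $\colltwo$ relative to $\coll$. Thickness of $\colltwo$ is handed to us directly by the hypothesis that $\colltwo$ is a $\gamma$-uncrossing of $\collpart$, and the half-injectivity property P3 is an \emph{intrinsic} property of $\colltwo$ (it only constrains pairs inside $\colltwo$ itself, with no reference to any ``outer'' collection), so it holds without any further work. Thus the only real content is to verify P1 and P2 relative to $\coll$, both of which I expect to fall out from transitivity of set inclusion.

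For P1, given any $(A,B) \in \coll$, I would first invoke the partial-uncrossing hypothesis to obtain some $(A_1,B_1) \in \collpart$ with $C(A_1,B_1) \subseteq C(A,B)$ and $W(A_1,B_1) \subseteq W(A,B)$, and then invoke the uncrossing hypothesis applied to this $(A_1,B_1)$ to obtain $(A_2,B_2) \in \colltwo$ with $C(A_2,B_2) \subseteq C(A_1,B_1)$ and $W(A_2,B_2) \subseteq W(A_1,B_1)$. Transitivity of $\subseteq$ then yields $C(A_2,B_2) \subseteq C(A,B)$ and $W(A_2,B_2) \subseteq W(A,B)$, which is exactly what P1 demands of $(A_2,B_2)$ as a representative of $(A,B)$. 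For P2 the reasoning is identical but shorter: starting from any $(A_2,B_2) \in \colltwo$, chaining the two P2 guarantees produces first some $(A_1,B_1) \in \collpart$ with $C(A_2,B_2) \subseteq C(A_1,B_1)$, and then some $(A,B) \in \coll$ with $C(A_1,B_1) \subseteq C(A,B)$, and transitivity closes the loop.

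There is no substantive obstacle here; the lemma simply formalises the fact that ``uncrossing'' and ``partial uncrossing'' relations chain nicely, and the argument is mechanical once the definitions are unpacked. The paper's remark that this follows ``immediately'' from the definitions reflects precisely this: the only observations needed are that the cut and witness-set inclusions are transitive, and that P3 is not relational so it survives intact under composition.
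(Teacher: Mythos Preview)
Your proposal is correct and matches the paper's approach: the paper simply asserts that the lemma follows immediately from the definitions, and your argument is precisely the mechanical unpacking of those definitions (transitivity for P1 and P2, intrinsicness for P3 and thickness) that justifies that assertion.
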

\begin{lemma}
  \label{lem:union}
  If $\coll_1$ and $\coll_2$ are two $\gamma$-thick connections, $\colltwo_1$
  is a $\gamma$-partial-uncrossing of $\coll_1$, and $\colltwo_2$ is a
  $\gamma$-partial-uncrossing of $\coll_2$, then $\colltwo_1 \cup \colltwo_2$
  is a $\gamma$-partial-uncrossing of $\coll_1 \cup \coll_2$.
\end{lemma}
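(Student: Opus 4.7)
The plan is to verify the three requirements in Definition~\ref{define:partial} (thickness, P1, P2) for $\colltwo_1 \cup \colltwo_2$ with respect to $\coll_1 \cup \coll_2$, by direct case analysis based on which of $\coll_1, \coll_2, \colltwo_1, \colltwo_2$ a given pair belongs to. Nothing here should require new combinatorial insight; the lemma is really a bookkeeping statement that compatibility of partial uncrossings is preserved under disjoint union of source collections.

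First I would dispense with thickness: every pair $(A',B') \in \colltwo_1 \cup \colltwo_2$ lies in $\colltwo_1$ or $\colltwo_2$, and since both $\colltwo_1$ and $\colltwo_2$ are $\gamma$-thick collections (being partial uncrossings), $(A',B')$ itself is $\gamma$-thick with respect to $(G,t,E_R)$. Hence $\colltwo_1 \cup \colltwo_2$ is $\gamma$-thick.

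Next I would verify P1. Let $(A,B) \in \coll_1 \cup \coll_2$. If $(A,B) \in \coll_1$, then because $\colltwo_1$ is a $\gamma$-partial-uncrossing of $\coll_1$, there exists a representative $(A',B') \in \colltwo_1$ with $C(A',B') \subseteq C(A,B)$ and $W(A',B') \subseteq W(A,B)$; this same pair lies in $\colltwo_1 \cup \colltwo_2$ and serves as a representative. The symmetric argument handles $(A,B) \in \coll_2$ using $\colltwo_2$. For P2, take any $(A',B') \in \colltwo_1 \cup \colltwo_2$; if $(A',B') \in \colltwo_1$ then by property P2 for the partial uncrossing $\colltwo_1$ of $\coll_1$, there is $(A,B) \in \coll_1 \subseteq \coll_1 \cup \coll_2$ with $C(A',B') \subseteq C(A,B)$, and symmetrically for $\colltwo_2$.

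There is no real obstacle: the only subtle point is that Definition~\ref{define:partial} does not demand P3, so we do not have to worry about two pairs coming from different sides of the union sharing a cut. (Indeed, this is precisely why the lemma is stated for partial uncrossings rather than full uncrossings, and is consistent with the role it plays together with Lemma~\ref{lem:compose} later in Section~\ref{sec:proportion}.)
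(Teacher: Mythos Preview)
Your proposal is correct and is exactly the direct verification of Definition~\ref{define:partial} that the paper has in mind; indeed the paper does not give a proof beyond stating that the lemma follows immediately from the definitions.
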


\subsection{Proportionate Uncrossings: An Existence Theorem}
\label{sec:exists-proportionate}
The main technical result of this section is the following:
\begin{theorem}
\label{thm:uncross}
  For every $\gamma$-thick collection $\coll$, there exists a
  $\gamma$-uncrossing of $\coll$. 
\end{theorem}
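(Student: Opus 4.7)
The plan is strong induction on the complexity measure $\Phi(\coll) := \sum_{(A,B) \in \coll} (|A| + |B|)$. If $\coll$ already satisfies property P3, then $\colltwo := \coll$ is a valid $\gamma$-uncrossing by Lemma~\ref{lem:self-uncross}. Otherwise, three distinct pairs $(A_i, B_i) \in \coll$ ($i = 1, 2, 3$) share a common cut $C$. The central structural claim is: \emph{among any three distinct $\gamma$-thick pairs with a common cut, at least one admits a strictly smaller $\gamma$-thick sub-pair}, i.e., a pair $(A', B')$ with $C(A', B') \subseteq C(A_i, B_i)$, $W(A', B') \subseteq W(A_i, B_i)$, and $|A'| + |B'| < |A_i| + |B_i|$. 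Given this claim, replacing the reducible pair in $\coll$ by its sub-pair yields a collection $\coll'$ with $\Phi(\coll') < \Phi(\coll)$; combining Lemmas~\ref{lem:self-uncross} and~\ref{lem:union} shows $\coll'$ is a $\gamma$-partial uncrossing of $\coll$; by the induction hypothesis, $\coll'$ admits a $\gamma$-uncrossing $\colltwo$; and Lemma~\ref{lem:compose} then gives that $\colltwo$ is also a $\gamma$-uncrossing of $\coll$, closing the induction.

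To prove the structural claim, set $S_i := A_i \cup B_i$. The equality $\delta(S_i) = C$ for all $i$ forces each $S_i$ to be a union of connected components of the graph obtained from $G$ by deleting the edges of $C$, and a direct cut-equality analysis (identifying which $G$-edges must be absent between $S_i \cap S_j$, $S_i \triangle S_j$, and $V \setminus (S_i \cup S_j)$) yields the clean identities $W(A_i \cap A_j, B_i \cap B_j) = W_i \cap W_j$ and $C(A_i \cap A_j, B_i \cap B_j) = C \setminus (W_i \triangle W_j)$. I split into two subcases. In the \emph{nested} subcase where some $S_i \subsetneq S_j$, a short argument gives $W_i \subseteq W_j$, so $(A_i, B_i)$ itself serves as a strictly smaller $\gamma$-thick sub-pair of $(A_j, B_j)$. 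In the \emph{non-nested} subcase where the $S_i$ are pairwise incomparable, every intersection pair $(A_i \cap A_j, B_i \cap B_j)$ has strictly fewer vertices than $(A_j, B_j)$; I then combine the three thickness inequalities $t(W_i \cap E_R) > \gamma\, t(C)$ with the identity $t(C(A_i \cap A_j, B_i \cap B_j)) = t(C) - t(W_i \triangle W_j)$ via a pigeonhole argument to conclude that at least one intersection pair is $\gamma$-thick.

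The main obstacle is the averaging in the non-nested subcase, since a naive Cauchy--Schwarz bound on pairwise witness intersections is insufficient for small $\gamma$. The resolution exploits the fact that the weight $t(W_i \triangle W_j)$ of the diagonal edges drops out of both the intersection witness and the intersection cut by exactly the same amount: any witness mass ``lost'' on the left of the $\gamma$-thickness inequality is matched by a proportional shrinkage on the right. A combinatorial averaging across the three pairs, weighted by the slack $t(W_i \triangle W_j)$, then yields the required $\gamma$-thick intersection pair for any $\gamma \in (0, 1]$, establishing the structural claim and completing the proof.
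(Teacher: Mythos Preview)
Your overall induction scheme (on $\Phi(\coll)$) and the use of Lemmas~\ref{lem:self-uncross}--\ref{lem:compose} are fine, and your nested subcase is correct: when $S_i \subsetneq S_j$ and $C(A_i,B_i)=C(A_j,B_j)$, the common-cut constraint indeed forces $W_i \subseteq W_j$. The problem is in the non-nested subcase. Your averaging claim---that among three $\gamma$-thick pairs with a common cut, some \emph{pairwise intersection} $(A_i\cap A_j,\,B_i\cap B_j)$ must be $\gamma$-thick---is false. A small explicit counterexample: take $P=\{p_1,p_2,p_3\}$, $Q=\{q_1,q_2,q_3\}$, the three edges $(p_k,q_k)$ with unit weight, $E_R=E$, and set $A_k=\{p_k\}$, $B_k=Q\setminus\{q_k\}$. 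All three pairs have the common cut $C=\{(p_1,q_1),(p_2,q_2),(p_3,q_3)\}$, the witnesses $W_k=\{(p_k,q_k)\}$ are pairwise disjoint, and each pair is $\gamma$-thick for any $\gamma<1/3$. But for every $i\neq j$ we have $A_i\cap A_j=\emptyset$, so $W(A_i\cap A_j,B_i\cap B_j)=\emptyset$ while $C(A_i\cap A_j,B_i\cap B_j)=C\setminus(W_i\cup W_j)$ is the remaining single edge; no intersection pair is $\gamma$-thick. Your heuristic that ``the mass $t(W_i\triangle W_j)$ drops out of both sides equally'' is not correct either: the witness side loses only $t((W_i\setminus W_j)\cap E_R)$, whereas the cut side loses the full $t(W_i\triangle W_j)$, and these differ by $t(W_j\setminus W_i)$ even when $E_R=E$.

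The paper avoids this obstruction by working not with pairwise intersections but with the full \emph{Venn decomposition} of all pairs sharing the cut $S$. The key structural fact (proved in Lemma~\ref{lem:uncross}) is that every edge leaving a Venn cell $A_{(b)}$ lands in $B_{(b)}$ or $B_{(\bar b)}$, which forces the Venn witnesses $W_{(b)}$ to partition each $W_i$ and the Venn cuts $C_{(b)}$ to partition $S$; an averaging over \emph{Venn pieces with $b_i=1$} (not over pairwise intersections) then produces a $\gamma$-thick representative for each $(A_i,B_i)$, and half-injectivity is automatic because $C_{(b)}=C_{(\bar b)}$ are the only coincidences. In the counterexample above, the correct sub-pair for $(A_1,B_1)$ is the Venn piece $(\{p_1\},\emptyset)$---a \emph{difference}, not an intersection. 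So your structural claim is actually true, but its proof requires Venn pieces; restricting to intersections is exactly what breaks.
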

The proof is via induction over the ``largest cut'' corresponding to any pair
in the collection $\coll$; each inductive step ``uncrosses'' the witness sets
which corresponds to this largest cut. Before proving this theorem, we need to
provide several useful definitions and also establish a key lemma.

Define some total ordering $\prec$ over all subsets of $E$ which respects set
cardinality, so that if $|E_1| < |E_2|$ then $E_1 \prec E_2$. Overload
notation to use $C(\coll)$ to denote the set of cuts $\{C(A,B) : (A,B) \in
\coll$\}. Analogously, use $W(\coll)$ to denote the set of witness sets
corresponding to pairs in $\coll$. Since $C(A,B)$ may be equal to $C(A',B')$
for $(A,B) \neq (A',B')$, it is possible that $|C(\coll)|$ may be smaller than
$|\coll|$. In fact, if $\coll$ and $|C(\coll)|$ are equal, then $\coll$ is its
own $\gamma$-uncrossing and the theorem is trivially true. Similarly, it is
possible that $W(A,B)$ is equal to $W(A',B')$ for two different pairs $(A,B)$
and $(A',B')$ in $\coll$. However, suppose $W(A,B) = W(A',B')$ and $C(A,B) =
C(A',B')$ for two different pairs $(A,B)$ and $(A',B')$ in $\coll$. In this
case, we can remove one of the two pairs from the collection to obtain a new
collection $\coll'$; it is easy to see that a $\gamma$-uncrossing of $\coll'$
is also a $\gamma$-uncrossing of $\coll$. So we will assume without loss of
generality that for any two pairs $(A,B)$ and $(A',B')$ in $\coll$, either
$W(A,B) \neq W(A',B')$ or $C(A,B) \neq C(A',B')$; we will call this the {\em
  non-redundancy} assumption.

We will now prove a key lemma which contains the meat of the uncrossing
argument. When we use this lemma later in the proof of
theorem~\ref{thm:uncross}, we will only use the fact that there exists a
$\gamma$-partial-uncrossing of $\coll$, where $\coll$ satisfies the
preconditions of the lemma. However, the stronger claim of existence of a
$\gamma$-uncrossing does not require much additional work and appears to be an
interesting graph theoretic argument in its own right, so we prove this
stronger claim.

\begin{lemma}
  \label{lem:uncross}
  If $\coll$ is a $\gamma$-thick collection such that $|\coll| > 2$, $\coll$
  satisfies the non-redundancy assumption, and $C(\coll)$ contains a single
  set $S$, then there exists a $\gamma$-uncrossing $\colltwo$ of
  $\coll$. Further, for every pair $(A,B)\in \colltwo$, we have $C(A,B)
  \subset S$.
\end{lemma}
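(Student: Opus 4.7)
The plan is to build $\colltwo$ from the common refinement of the bipartitions induced by $\coll$. Enumerate $\coll = \{(A_i, B_i)\}_{i=1}^k$ with $k = |\coll| > 2$, and for each $J \subseteq [k]$ define the classes $P_J = \{p \in P : \{i : p \in A_i\} = J\}$ and $Q_J = \{q \in Q : \{i : q \in B_i\} = J\}$. The hypothesis $C(A_i, B_i) = S$ for every $i$ forces the sharp structural constraint that every edge of $S$ must lie between $P_J$ and $Q_{\bar J}$ for some $J$ (where $\bar J = [k] \setminus J$), while every edge of $E \setminus S$ lies within a single class. Writing $S_J$ for the $S$-edges from $P_J$ to $Q_{\bar J}$, one obtains the disjoint decomposition $W(A_i, B_i) = \bigcup_{J \ni i} S_J$ and the identity $t(S) = \sum_{\{J, \bar J\}} t(S_J \cup S_{\bar J})$.

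The candidate representatives will be the class pairs $(P_J, Q_J)$, for which a direct computation gives $W(P_J, Q_J) = S_J$ and $C(P_J, Q_J) = S_J \cup S_{\bar J}$. For each $i$ I will show that some $J \ni i$ yields a $\gamma$-thick pair $(P_J, Q_J)$: if this failed, summing the inequalities $t(S_J \cap E_R) \le \gamma\, t(S_J \cup S_{\bar J})$ over $J \ni i$ (each antipodal pair is counted exactly once, by whichever of $J, \bar J$ contains $i$) gives
\[
t(W(A_i, B_i) \cap E_R) \;=\; \sum_{J \ni i} t(S_J \cap E_R) \;\le\; \gamma \sum_{\{J, \bar J\}} t(S_J \cup S_{\bar J}) \;=\; \gamma\, t(S),
\]
contradicting the $\gamma$-thickness of $(A_i, B_i)$. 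Let $\colltwo$ be the resulting set of class pairs, one chosen per $i$ with duplicates removed. Properties P1 and P2 are then immediate, since the chosen representative for $(A_i, B_i)$ has $W(P_J, Q_J) = S_J \subseteq W(A_i, B_i)$ (because $J \ni i$) and $C(P_J, Q_J) \subseteq S$. For half-injectivity P3, two distinct pairs $(P_J, Q_J), (P_{J'}, Q_{J'})$ share a cut exactly when $\{J, \bar J\} = \{J', \bar J'\}$, so at most two members of $\colltwo$ can map to any one cut.

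The main obstacle is verifying the strict containment $C(P_J, Q_J) \subsetneq S$ required by the lemma, and this is precisely where $|\coll| > 2$ together with non-redundancy becomes crucial. The cut $S_J \cup S_{\bar J}$ is strictly smaller than $S$ exactly when some other antipodal pair contributes non-empty $S$-edges. Under non-redundancy the witnesses $W(A_1, B_1), \ldots, W(A_k, B_k)$ are pairwise distinct; but since $W_i = \bigcup_{J \ni i} S_J$ is determined by $i$'s side-membership choices across the non-trivial antipodal pairs, a single non-trivial antipodal pair could produce at most two distinct witnesses, contradicting $k \geq 3$. Hence at least two non-trivial antipodal pairs must exist, and every chosen singleton representative therefore has cut strictly contained in $S$. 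Degenerate cases (empty classes or empty $S_J$'s) are harmless, since such class pairs are never $\gamma$-thick and so are never selected as representatives.
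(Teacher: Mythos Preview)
Your argument is correct and follows essentially the same route as the paper's proof: the classes $P_J,Q_J$ are exactly the paper's Venn pairs $(A_{(b)},B_{(b)})$ (with $J$ playing the role of the bit-vector $b$), your $S_J$ is the paper's $W_{(b)}$, and the averaging argument for P1, the disjointness of antipodal cut pieces for P3, and the pigeonhole ``one non-trivial antipodal pair yields at most two distinct witnesses'' for strict containment all match the paper's reasoning. The only cosmetic difference is that you take $\colltwo$ to be the set of representatives actually chosen (one per $i$, deduplicated), whereas the paper takes $\colltwo$ to be \emph{all} $\gamma$-thick Venn pairs with $b\neq 0$; both choices satisfy P1--P3 and the strict containment, so this does not affect correctness.
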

\begin{proofnoqed}
  Let $\coll = \{(A_1,B_1), (A_2,B_2), \ldots, (A_J,B_J)\}$. Since $C(A_i,B_i)
  = S$ for all $i$, we know by the non-redundancy assumption that $W(A_i,B_i)
  \neq W(A_{i'},B_{i'})$ for $i\neq i'$. We break the proof down into multiple
  stages.

  \begin{enumerate}
  \item {\em Definition of Venn witnesses and Venn cuts. } For any
    $J$-dimensional bit-vector $b\in\{0,1\}^J$, define
    $$A_{(b)} = \left(P
      \cap \left(\bigcap_{b_i = 1} A_i\right)\right) \setminus
    \left(\bigcup_{b_i=0} A_i\right) \mbox{, and similarly, } B_{(b)} =
    \left(Q\cap\left(\bigcap_{b_i = 1} B_i\right)\right) \setminus
    \left(\bigcup_{b_i=0} B_i\right).$$

    We overload notation and use $W_{(b)}$ to denote the witness set
    $W(A_{(b)},B_{(b)})$ and $C_{(b)}$ to denote the cut set
    $C(A_{(b)},B_{(b)})$. A node $u$ belongs to $A_{(b)}$ if it is in every
    set $A_i$ such that $b_i=1$ and not in any of the sets $A_i$ for which
    $b_i=0$. Thus, each $A_{(b)}$ corresponds to one of the regions in the
    Venn diagram of the sets $A_1, A_2, \ldots, A_J$, and the analogous
    statement holds for each $B_{(b)}$. Hence, we will refer to the sets
    $W_{(b)}$ and $C_{(b)}$ as the Venn-witness and the Venn-cut for $b$,
    respectively, and refer to the pair $(A_{(b)},B_{(b)})$ as a Venn
    pair. Also, we will use $\overline{b}$ to refer to a vector which differs
    from $b$ in every bit.

  \item {\em The special structure of Venn witnesses and Venn cuts.} Consider
    an edge $(u,v)$ that goes out of $A_{(b)}$. Suppose that edge goes to
    $B_{(d)}$ where $d\neq b$ and $d\neq \overline{b}$. Then there must exist
    $1\leq i,i' \leq J$ such that $b_i = d_i$ and $b_{i'} \neq d_{i'}$. Since
    $b_i = d_i$, either $u\in A_i, v\in B_i$ (if $b_i = d_i = 1$) or $u\not\in
    A_i, v\not\in B_i$ (if $b_i = d_i = 0$). In either case the edge $(u,v)$
    does not belong to the cut $C(A_i, B_i)$, and since all pairs in $\coll$
    have the same cut $S$, we conclude that $(u,v) \not\in S$. On the other
    hand, since $b_{i'} \neq d_{i'}$, either $u\in A_{i'}, v\not\in B_{i'}$
    (if $b_{i'} = 1, d_{i'} = 0$) or $u\not\in A_{i'}, v\in B_{i'}$ (if
    $b_{i'} = 0, d_{i'} = 1$). In either case the edge $(u,v)$ belongs to the
    cut $C(A_{i'},B_{i'})$ and hence to $S$, which is a contradiction. Thus,
    {\em any edge from $A_{(b)}$ goes to either $B_{(b)}$ or
      $B_{(\overline{b})}$}.

    If the edge $(u,v)$ goes to $B_{(b)}$ then it does not belong to any
    witness set in $W(\coll)$, any Venn witness set, any Venn cut, or $S$. If
    $(u,v)$ goes to $B_{(\overline{b})}$ then it belongs to $S$, to the Venn
    witness set $W_{(b)}$, to the Venn cuts $C_{(b)}$ and $C_{(\overline b)}$,
    and to no other Venn witness set or Venn cut. This edge also belongs to
    $W(A_i,B_i)$ for all $i$ such that $b_i = 1$. These observations, and the
    definitions of Venn witnesses, cuts, and pairs easily lead to the
    following consequences:
    \begin{equation}
      \label{eq:venn-witness-disjoint}
      W_{(b)} \cap W_{(d)} =\emptyset \mbox{ if } b \neq d,
    \end{equation}
    \begin{equation}
      \label{eq:venn-witness-union}
      W(A_i,B_i) = \bigcup_{b\in\{0,1\}^J : b_i = 1}W_{(b)},
    \end{equation}
    \begin{equation}
      \label{eq:venn-cut-symmetry}
      C_{(b)} = C_{(\overline{b})},
    \end{equation}
    \begin{equation}
      \label{eq:venn-cut-disjoint}
      C_{(b)} \cap C_{(d)} = \emptyset \mbox{ if } b\neq d \mbox { and } b\neq
    \overline d,
    \end{equation}
    \begin{equation}
      \label{eq:venn-cut-union}
      (\forall i, 1\le i \le J): S = \bigcup_{b\in\{0,1\}^J : b_i = 1}C_{(b)},
    \end{equation}
 and finally,
 \begin{equation}
   \label{eq:venn-witness-cut}
   W_{(b)} \cup W_{(\overline b)} = C_{(b)}.
 \end{equation}

\item{\em The collection $\colltwo$.} Define $\colltwo$ to consist of all
  $\gamma$-thick Venn pairs $(A_{(b)}, B_{(b)})$ where $b$ is not the all zero
  vector.

\item{\em Proving that $\colltwo$ is a $\gamma$-uncrossing of $\coll$.}  {\bf
    (P1):} Fix some $i, 1\le i\le J$. Since $\coll$ is a $\gamma$-thick
  collection, it follows from the definition that $(A_i, B_i)$ must be a
  $\gamma$-thick pair. From equations~\ref{eq:venn-witness-union}
  and~\ref{eq:venn-witness-disjoint}, we know that $t(W(A_i,B_i) \cap E_R) =
  \sum_{b\in\{0,1\}^J : b_i = 1} t(W_{(b)} \cap E_R)$. We also know, from
  equations~\ref{eq:venn-cut-disjoint} and~\ref{eq:venn-cut-union}, that $t(S)
  = \sum_{b\in\{0,1\}^J : b_i = 1}t(C_{(b)})$. Hence, there must be some $b\in
  \{0,1\}^J$ such that $b_i = 1$ and $(A_{(b)},B_{(b)})$ is $\gamma$-thick,
  which in turn implies that $(A_{(b)},B_{(b)})$ is in $\colltwo$. This is the
  representative of $(A_i,B_i)$ and hence $\colltwo$ satisfies P1.  {\bf
    (P2):} This follows trivially from equation~\ref{eq:venn-cut-union}.  {\bf
    (P3):} From equation~\ref{eq:venn-cut-disjoint} we know that there are
  only two possible Venn pairs (specifically, $(A_{(b)},B_{(b)})$ and
  $(A_{(\overline b)},B_{(\overline b)})$) that have the same non-empty cut
  $C_{(b)}$. Observe that our definition of $\gamma$-thickness involves
  ``strict inequality'', and hence Venn pairs where the Venn witness set and
  the Venn cut are both empty can't be $\gamma$-thick and can't be in
  $\colltwo$.

\item {\em Proving that $C(X,Y) \subset S$ for all pairs $(X,Y)\in \colltwo$.}
  Any cut $C(A,B) \in C(\colltwo)$ is of the form $C_{(b)}$ for some
  $J$-dimensional bit vector $b$. Each $C_{(b)} \subseteq S$, from
  equation~\ref{eq:venn-cut-union}. We will now show that this containment is
  strict. Suppose not, {\em i.e.}, there exists some $C_{(b)} = S$. By
  equation~\ref{eq:venn-cut-symmetry}, $C_{(\overline b)} = S$ as well. Since
  $J > 2$, either $b$ or $\overline b$ must have two bits that are set to 1;
  without loss of generality, assume that $b_1 = b_2 = 1$. From
  equations~\ref{eq:venn-witness-disjoint} and~\ref{eq:venn-witness-cut}, we
  know that $C_{(b)}$ (and hence $S$) is the disjoint union of $W_{(b)}$ and
  $W_{(\overline b)}$. Any edge in $W_{(b)}$ must belong to both $W(A_1,B_1)$
  and $W(A_2, B_2)$, whereas any edge in $W_{(\overline b)}$ can not belong to
  either $W(A_1,B_1)$ or $W(A_2, B_2)$. Hence, $W(A_1, B_1) = W(A_2, B_2) =
  W_{(b)}$ which contradicts the non-redundancy assumption on
  $\coll$. Therefore, we must have $C_{(b)} \subset S$. $\qed$
\end{enumerate}
\end{proofnoqed}

\begin{proofof}{Theorem~\ref{thm:uncross}}
  The proof will be by induction over the largest set in $C(\coll)$ according
  to the ordering $\prec$. Let $M(\coll)$ denote this largest set.

  For the base case, suppose $M(\coll)$ is the smallest set $S$ under the
  ordering $\prec$. Then $S$ must be singleton, $C(\coll)$ must have just a
  single set $S$, and $W(\coll)$ must also have a single witness set, which
  must be the same as $S$ since $\coll$ is $\gamma$-thick. By the
  non-redundancy assumption, $\coll$ must have at most one pair, and is its
  own $\gamma$-uncrossing.

  For the inductive step, consider any possible cut $S$ and assume that the
  theorem is true when $M(\coll) \prec S$. We will show that the theorem is
  also true when $M(\coll) = S$, which will complete the inductive proof.

  Suppose there is a unique $(A,B) \in \coll$ such that $C(A,B) =
  S$. Intuitively, one would expect this to be the easy case, since there is
  no ``uncrossing'' to be done for $S$, and indeed, this case is quite
  straightforward. Define $\coll' = \coll - (A,B)$. Let $\colltwo'$ denote a
  $\gamma$-uncrossing of $\coll$, which is guaranteed to exist by the
  inductive hypothesis. Since $\colltwo'$ is $\gamma$-thick, so is $\colltwo =
  \colltwo' \cup \{(A,B)\}$. The pair $(A,B)$ clearly has a representative in
  $\colltwo$ (itself), and any $(A',B') \in \coll - (A,B)$ has a
  representative in $\colltwo'$ and hence also in $\colltwo$. Thus, $\colltwo$
  satisfies property P1 for being a $\gamma$-uncrossing of $\coll$.  Every set
  in $C(\colltwo')$ is a subset of some cut in $C(\coll')$ (by property P2)
  and $C(A,B)$ is also in $C(\coll)$, and hence $\colltwo$ satisfies property
  P2 for being a $\gamma$-uncrossing of $\coll$. Every set in $\colltwo'$ is
  smaller than $C(A,B)$ according to $\prec$ and $\colltwo'$ satisfies
  property P3. Hence, $\colltwo$ also satisfies property P3. Thus, $\colltwo$
  is a $\gamma$-uncrossing of $\coll$. If there are exactly two distinct pairs
  $(A,B)$ and $(A',B')$ in $\coll$ such that $C(A,B) = C(A',B') = S$, then the
  same argument works again, except that $\coll' = \coll \setminus\{ (A,B)
  ,(A',B')\}$ and $\colltwo = \colltwo' \cup \{(A,B),(A',B')\}$.

  We now need to tackle the most interesting case of the inductive step, where
  there are more than two pairs in $\coll$ that correspond to the same cut
  $S$. Write $\coll = \coll_1 \cup \coll_2$ where $C(A,B) \prec S$ for all
  $(A,B) \in \coll_1$ and $C(A,B) = S$ for all $(A,B) \in \coll_2$. Recall
  that for two different pairs $(A,B)$ and $(A',B')$ in $\coll_2$, we must
  have $W(A,B) \neq W(A',B')$ by the non-redundancy assumption. From
  lemma~\ref{lem:uncross}, there exists a $\gamma$-partial-uncrossing, say
  $\collpart_2$, of $\coll_2$ with the property that for every set $S' \in
  C(\collpart_2)$, we have $S' \subset S$, and hence $S' \prec S$. By
  lemma~\ref{lem:self-uncross}, we know that $\coll_1$ is its own
  $\gamma$-partial-uncrossing. Further, by definition of $\coll_1$, every set
  $S' \in C(\coll_1)$ must satisfy $S' \prec S$. Define $\collpart = \coll_1
  \cup \collpart_2$. By lemma~\ref{lem:union}, $\collpart$ is a
  $\gamma$-partial-uncrossing of $\coll_1 \cup \coll_2$, {\em i.e.}, of
  $\coll$. Further, for every cut $S' \in C(\collpart)$, we have $S' \prec
  S$. Hence, by our inductive hypothesis, there exists a $\gamma$-uncrossing
  of $\collpart$; let $\colltwo$ be a $\gamma$-uncrossing of $\collpart$. By
  lemma~\ref{lem:compose}, $\colltwo$ is also a $\gamma$-uncrossing of
  $\coll$, which completes the inductive proof.
\end{proofof}

\begin{remark} \label{rm:other-approach}
An alternate approach to relating cuts and witness sets is to suitably
modify the proof of the
Bencz\'{u}r-Karger sampling theorem, circumventing the need for the
proportionate uncrossing
theorem. The idea is based on the observation that Karger's sampling
theorem also holds for
vertex cuts in graphs. Since Bencz\'{u}r-Karger sampling theorem is
proved using multiple
invocations of Karger's sampling theorem, it is possible to set up a
correspondence between
cuts and witness sets using a vertex-cut version of the
Bencz\'{u}r-Karger sampling theorem.
However, we prefer to use here the approach based on the proportionate
uncrossing theorem
as it is an interesting combinatorial statement in its own right.
\end{remark}

\section{An $\tilde{O}(n^{1.5})$ Time Algorithm for Finding a Perfect Matching}
\label{sec:algo}

We present here an $\tilde{O}(n^{1.5})$ time randomized algorithm to find a
perfect matching in a given $d$-regular bipartite graph $G(P,Q,E)$ on $2n$
vertices.  Throughout this section, we follow the convention that for any pair
$(A,B)$, the sets $C(A,B)$ and $W(A,B)$ are defined with respect to the graph
$G$.  Our starting point is the following theorem, established by Goel,
Kapralov, and Khanna~\cite{gkk:rbp08}\footnote{Part 1 of
  theorem~\ref{thm:GKK09} corresponds to theorem 2.3 in~\cite{gkk:rbp08}, part
  2 is proved as part of the proof of theorem 2.1 in~\cite{gkk:rbp08}, and
  part 3 combines remark 2.5 in~\cite{gkk:rbp08} with Karger's sampling
  theorem~\cite{kar94a}.}

\begin{theorem}
\label{thm:GKK09}
Let $G(P,Q,E)$ be a $d$-regular bipartite graph, $\epsilon$ any number in
$(0,\frac{1}{2})$, and $c$ a suitably large constant that depends on
$\epsilon$. There {\em exists} a decomposition of $G$ into $k = O(n/d)$
vertex-disjoint bipartite graphs, say $G_1=(P_1,Q_1,E_1), G_2=(P_2,Q_2,E_2),
\ldots, G_k=(P_k,Q_k,E_k)$, such that

\begin{enumerate}
\item
Each $G_i$ contains at least $d/2$ perfect matchings, and the minimum cut in each $G_i$ is $\Omega(d^2/n)$.

\item Let $\coll$ denote the set of relevant pairs with respect to this
  decomposition, and $E_R$ denote the set of relevant edges.  Then for each
  $(A,B)$ in $\coll$, we have $|W(A,B) \cap E_R| \geq \frac{1}{2} |C(A,B)|$.

\item Let $G'(P,Q,E')$ be a random graph generated by sampling the edges of
  $G$ {\em uniformly} at random with probability $p = \frac{c n \ln n}{d^2}$.
  Then with probability at least $1 - 1/n$, for every pair $(A,B) \in
  \coll$,
$$|W(A,B) \cap E' \cap E_R| > (1 - \epsilon)p |W(A,B) \cap E_R| > \left(
  \frac{1- \epsilon}{2(1+\epsilon)} \right) |C(A,B)\cap E'|.$$
\end{enumerate}
\end{theorem}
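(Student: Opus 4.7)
The plan is to prove the three parts in sequence, since each builds on the previous, and to reduce the final union bound to Karger's cut-counting theorem.

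For Part 1, I would decompose $G$ greedily by repeatedly splitting along sparse cuts: while the current component admits a cut of value less than $\beta d^2/n$ for a suitably small constant $\beta$, remove those edges and recurse on each side; otherwise freeze the component as one of the $G_i$. A charging argument that amortizes each removed edge against the vertices it separates caps the total number of frozen pieces at $O(n/d)$, and the minimum-cut guarantee $\Omega(d^2/n)$ holds by construction. Since each vertex is incident to only a small fraction of $d$ removed edges over the course of the recursion, its internal degree in its final piece remains at least $d/2$; applying K\"onig's edge-coloring theorem to the bipartite $G_i$ then yields the $d/2$ edge-disjoint perfect matchings.

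For Part 2, I would exploit bipartiteness to decompose $C(A,B) = W(A,B) \sqcup W'(B,A)$, where $W'(B,A)$ denotes the edges from $B$ to $P\setminus A$, and then use $d$-regularity to compute $|W(A,B)| = |A|d - |E(A,B)|$ and $|W'(B,A)| = |B|d - |E(A,B)|$. The condition $|A|>|B|$ immediately gives $|W(A,B)| > |C(A,B)|/2$. To push the same bound through to the relevant portion $W(A,B)\cap E_R$, I would invoke the minimality clause in the definition of a relevant pair: if some $u\in A$ had a large number of non-relevant (inter-piece) incident edges, then $(A\setminus\{u\},B')$ for a suitable $B'\subseteq B$ would exhibit a proper sub-witness contained in $W(A,B)\cap E_R$, contradicting condition~3 of relevance. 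Combined with the internal-degree guarantee from Part 1, this promotes the $\tfrac12$-bound to the relevant portion.

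For Part 3, I would bound each relevant pair individually by a Chernoff tail and then union-bound. By Parts 1 and 2, $|W(A,B)\cap E_R| \geq |C(A,B)|/2 = \Omega(d^2/n)$, so its expected intersection with $E'$ is at least $p\cdot\Omega(d^2/n) = \Omega(\ln n)$; choosing $c$ large enough in terms of $\epsilon$ drives the single-pair failure probability below $n^{-\Omega(c)}$. The hard step is the union bound, since the number of relevant pairs can be exponential. The key tool is Karger's cut-counting theorem: the number of cuts of value at most $\alpha$ times the minimum cut is $O(n^{2\alpha})$, and by pairing each relevant pair with its associated cut $C(A,B)$ and summing Chernoff tails over dyadic scales of $\alpha$, the total failure probability is driven below $1/n$. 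The main obstacle to executing this plan cleanly is exactly this correspondence between witness sets and cuts, which is what the proportionate uncrossing theorem of Section~\ref{sec:proportion} is designed to formalize and re-use in the sequel.
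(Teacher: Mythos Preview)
First, a structural point: the paper does not prove Theorem~\ref{thm:GKK09} at all. It is quoted verbatim from~\cite{gkk:rbp08}, with the footnote explicitly mapping the three parts to results in that earlier paper and to Karger's sampling theorem~\cite{kar94a}. So there is no ``paper's own proof'' to compare against beyond that citation.

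Your outline for Parts~1 and~2 is in the right spirit but has real gaps. In Part~1, iteratively cutting along sparse cuts is indeed the mechanism (compare the related but different decomposition in Section~\ref{subsec:decomp}), but your claim that ``each vertex is incident to only a small fraction of $d$ removed edges'' does not follow from the global edge-removal bound; it needs a separate minimality argument at each step. More seriously, having minimum degree $\ge d/2$ and invoking K\"onig does \emph{not} yield $d/2$ perfect matchings: you first need $|P_i|=|Q_i|$ (which is not automatic for an arbitrary vertex partition of a $d$-regular bipartite graph) and then a $d/2$-regular spanning subgraph, not just a $\Delta$-edge-coloring. In Part~2 your use of minimality is hand-wavy; the actual argument leans on the $d/2$ edge-disjoint perfect matchings inside $G_i$ to lower-bound the relevant witness edges, not on the vertex-deletion contradiction you sketch.

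The substantive misconception is in Part~3. You identify the correspondence between witness sets and cuts as the obstacle and then point to the proportionate uncrossing theorem of Section~\ref{sec:proportion} as the fix. That is backwards relative to the paper's logic. Theorem~\ref{thm:GKK09} is imported from~\cite{gkk:rbp08} and predates uncrossing entirely; as the footnote says, Part~3 follows from Karger's sampling theorem applied at rate $p=\Theta\bigl((\ln n)/(d^2/n)\bigr)$, which is exactly the Karger rate for the $\Omega(d^2/n)$ minimum cut guaranteed by Part~1. Under \emph{uniform} sampling, concentration for the relevant witness sets can be reduced to concentration for cuts directly (for instance via the bipartite identity $2\,|W(A,B)\cap E_i| = |C_{G_i}(A)| - |C_{G_i}(B)| + |C_{G_i}(A\cup B)|$, together with Part~2 to control the relative error), without any uncrossing. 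The proportionate uncrossing theorem is the new tool of \emph{this} paper, and it is introduced precisely because the \emph{non-uniform} Bencz\'ur--Karger sampling in step~\textbf{(S2)} requires an injective-up-to-two map from witness sets to cuts with comparable inverse-strength mass (Theorems~\ref{thm:Benczur-Karger} and~\ref{thm:correctness}); it plays no role in establishing Theorem~\ref{thm:GKK09}.
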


The last condition above says that in addition to all cuts, all relevant
witness edge sets are also preserved to within $(1 \pm \epsilon)$ of their
expected value in $G'$, with high probability. We emphasize here that the
decomposition highlighted in Theorem~\ref{thm:GKK09} will be used only in the
analysis of our algorithm; the algorithm itself is oblivious to this
decomposition.

Our algorithm consists of the following three steps.

\begin{description}

\item[(S1)] Generate a random graph $G'=(P,Q,E')$ by sampling edges of $G$
  {\em uniformly} at random with probability $p = \frac{c_1 n \ln n}{d^2}$
  where $c_1$ is a constant as in Theorem~\ref{thm:GKK09}\footnote{The time
    required for this sampling is proportional to the number of edges chosen,
    assuming the graph is presented in an adjacency list representation with
    each list stored in an array.}.  We choose $\epsilon$ to be any fixed
  constant not larger than $0.2$.

\item[(S2)] The graph $G'$ contains $O( \frac{n^2 \ln n}{d})$ edges w.h.p. We
  now run the Bencz\'{u}r-Karger sampling algorithm~\cite{benczurkarger96} that
  takes $O(|E'|\ln^2 n)$ time to compute the strength $s_e$ of every edge $e$,
  and samples each edge $e$ with probability $p_e$;\footnote{In fact, this
    sampling algorithm computes an upper bound on $s_e$, but this only affects
    the running time and the number of edges sampled by a constant factor.}
  here $p_e$ is as given by Theorem~\ref{thm:Benczur-Karger} with $\gamma =
  1/3$. We show below that w.h.p. the graph $G^{''} = (P,Q,E^{''})$ obtained
  from this sampling contains a perfect matching.

\item[(S3)] Finally, we run the Hopcroft-Karp algorithm to obtain a maximum
  cardinality matching in $G^{''}$ in $O(n^{1.5} \ln n)$ time since by
  Theorem~\ref{thm:Benczur-Karger2}, $G^{''}$ contains $O(n \ln n)$ edges w.h.p.
\end{description}

\paragraph{Running time:} With high probability, the running time of this
algorithm is bounded by $O(\frac{n^2}{d}\ln^3 n + n^{1.5}\ln n)$. Since we can
always use the algorithm of Cole, Ost, and Schirra~\cite{cos:regular2001}
instead, the final running time is $O(\min\{m, \frac{n^2}{d}\ln^3 n +
n^{1.5}\ln n\})$. This reduces to $O(m)$ if $d \le \sqrt{n} \ln n $; to
$O(n^{1.5}\ln n)$ when $d \ge \sqrt{n}\ln^2 n$; and to at most $O((n\ln
n)^{1.5})$ in the narrow range $ \sqrt{n} \ln n < d < \sqrt{n}\ln^2 n$.

\paragraph{Correctness:} To prove correctness, we need to show that $G^{''}$
contains a perfect matching w.h.p.
\begin{theorem}\label{thm:correctness}
  The graph $G^{''}$ contains a perfect matching with probability $1 -
  O(1/n)$.
\end{theorem}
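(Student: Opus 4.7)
The plan is to use Lemma~\ref{lem:relevant-only} to reduce the theorem to showing that, with probability $1 - O(1/n)$, every relevant pair $(A,B) \in \coll$ has $W(A,B) \cap E_R \cap E'' \neq \emptyset$, and then to apply Theorem~\ref{thm:Benczur-Karger} to control the failure probability. I would first condition on the high-probability event of Theorem~\ref{thm:GKK09} part~3, which fails with probability at most $1/n$ and gives $|W(A,B) \cap E' \cap E_R| > \gamma_0 |C(A,B) \cap E'|$ for every $(A,B) \in \coll$, where $\gamma_0 = \tfrac{1-\epsilon}{2(1+\epsilon)} \geq 1/3$ (since we chose $\epsilon \leq 0.2$). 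The remainder of the argument proceeds conditional on this cardinality-based thickness.

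Next I would bridge this cardinality-based thickness to the strength-reciprocal thickness demanded by Theorem~\ref{thm:Benczur-Karger}. For each $(A,B) \in \coll$, I apply Lemma~\ref{lem:strength} to the multisets $S_1 = \{s_e : e \in W(A,B) \cap E' \cap E_R\}$ and $S_2 = \{s_e : e \in C(A,B) \cap E'\}$ (strengths in $G'$) to obtain a threshold $j(A,B)$ such that $\sum_{e \in W(A,B) \cap E' \cap E_R,\, s_e \geq j(A,B)} 1/s_e > \gamma_0 \sum_{e \in C(A,B) \cap E',\, s_e \geq j(A,B)} 1/s_e$. Grouping the pairs by threshold into $\coll^{(j)} = \{(A,B) : j(A,B) = j\}$, each $\coll^{(j)}$ is $\gamma_0$-thick with respect to the triple $(G'_{[j]}, 1/s_e, E_R)$, because the witness edges and cut edges of $(A,B)$ restricted to $G'_{[j]}$ are exactly the edges counted on the two sides of the above inequality. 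A crucial supporting observation is that every edge of $G'_{[j]}$ has the same strength whether measured in $G'$ or in $G'_{[j]}$, so the per-edge sampling rate $p_e$ in step (S2) at least matches the one prescribed by Theorem~\ref{thm:Benczur-Karger} when applied to $G'_{[j]}$ (in fact oversamples, since $\ln n \geq \ln|V(G'_{[j]})|$), and the sampling restricted to $E'_{[j]}$ is exactly the one covered by that theorem.

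Finally I would apply the proportionate uncrossing theorem (Theorem~\ref{thm:uncross}) to each $\coll^{(j)}$ in $(G'_{[j]}, 1/s_e, E_R)$, producing a $\gamma_0$-uncrossing $\colltwo^{(j)}$ whose natural map $(A',B') \mapsto C(A',B') \cap E'_{[j]}$ into cuts of $G'_{[j]}$ is at most two-to-one. A simple 2-coloring splits $\colltwo^{(j)}$ into subcollections $\colltwo^{(j)}_1, \colltwo^{(j)}_2$ that each inject into cuts of $G'_{[j]}$, and Theorem~\ref{thm:Benczur-Karger} (with $\gamma = 1/3$) then applies to each; by choosing the sampling constant in (S2) large enough, each application yields a failure bound of $n^{-c}$ for an arbitrarily large constant $c$. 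Union bounding over $j \in \{1, \ldots, 2n\}$ and the two subcollections gives aggregate failure $O(1/n)$. Property P1 of the uncrossing guarantees that every $(A,B) \in \coll^{(j)}$ has a representative $(A',B') \in \colltwo^{(j)}$ with $W(A',B') \subseteq W(A,B)$, so a hit on the representative's relevant witness set implies a hit on that of $(A,B)$; combining this with the $1/n$ failure from Theorem~\ref{thm:GKK09} and invoking Lemma~\ref{lem:relevant-only} yields a perfect matching in $G''$ with probability $1 - O(1/n)$. The main obstacle is the gap between cardinality-thickness (which Theorem~\ref{thm:GKK09} gives in $G'$) and strength-reciprocal thickness (which Theorem~\ref{thm:Benczur-Karger} demands); bridging it is what forces the per-strength partition via Lemma~\ref{lem:strength}, the verification that strengths survive the passage to $G'_{[j]}$, and the use of the uncrossing theorem separately inside each $G'_{[j]}$, while the half-injectivity of the uncrossing is handled routinely by the 2-coloring.
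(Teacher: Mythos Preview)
Your proposal is correct and follows essentially the same approach as the paper's proof: condition on Theorem~\ref{thm:GKK09}, use Lemma~\ref{lem:strength} to pass from cardinality-thickness to strength-reciprocal thickness and partition $\coll$ by the resulting threshold $j$, apply the proportionate uncrossing theorem inside each $G'_{[j]}$, split the uncrossed collection in two via half-injectivity, invoke Theorem~\ref{thm:Benczur-Karger} on each half, and union-bound over $j$ and the two halves. You are in fact slightly more explicit than the paper in noting that edge strengths are unchanged when passing from $G'$ to $G'_{[j]}$, which is what justifies applying Theorem~\ref{thm:Benczur-Karger} to $G'_{[j]}$ with the sampling rates computed in step~(S2).
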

\begin{proof}
  Consider the decomposition defined in Theorem~\ref{thm:GKK09}.  Let $\coll$
  denote the set of relevant pairs with respect to this decomposition, and let
  $E_R$ denote the set of all relevant edges with respect to this
  decomposition.  We will now focus on proving that, with high probability,
  for every $(A,B) \in \coll$, $W(A,B) \cap E_R \cap E^{''}\neq \emptyset$; by
  Lemma~\ref{lem:relevant-only}, this is sufficient to prove the theorem.

  For convenience, define $W'(A,B) = W(A,B)\cap E'$ and $C'(A,B) = C(A,B) \cap
  E'$. Assume for now that the low-probability event in
  Theorem~\ref{thm:GKK09} does not occur. Thus, by choosing $\epsilon \le
  0.2$, we know that for $\gamma = 1/3$, every relevant pair $(A,B) \in \coll$
  satisfies
  $ |W'(A,B) \cap E_R| > \gamma |C'(A,B)|.$
  
  Let $s'_e$ denote the strength of $e$ in $G'$. Recall that
  $G'_{[j]}=(V,E'_{[j]})$ is the graph with the same vertex set as $G'$ but
  consisting of only those edges in $E'$ which have strength at least
  $j$. Define $W'_{[j]}(A,B)$ to be the set of all edges in $W'(A,B) \cap E'_{[j]}$;
  define $C'_{[j]}(A,B)$ analogously. Define $t(e) =
  1/s'_e$. Since $ |W'(A,B) \cap E_R| > \gamma |C'(A,B)|$, by
  Lemma~\ref{lem:strength}, there must exist a $j$ such that
  $$\sum_{e\in
    (W'(A,B) \cap E_R), s'_e \ge j} \frac{1}{s'_e} > \gamma \sum_{e\in C'(A,B), s'_e
    \ge j} \frac{1}{s'_e} > 0,$$
  which implies that $(A,B)$ is $\gamma$-thick with
  respect to $(G'_{[j]}, t, E_R)$, as defined in Definition~\ref{def:gamma_thick}.
  Partition $\coll$ into $\coll_{[1]}$,
  $\coll_{[2]}, \ldots, \coll_{[n]}$, such that if $(A,B)\in \coll_{[j]}$ then
  $(A,B)$ is $\gamma$-thick with respect to $(G'_{[j]}, t, E_R)$, breaking
  ties arbitrarily if $(A,B)$ can belong to multiple $\coll_{[j]}$. Consider
  an arbitrary non-empty $\coll_{[j]}$. Let $\colltwo$ represent a
  $\gamma$-uncrossing of $\coll_{[j]}$, as guaranteed by
  Theorem~\ref{thm:uncross}. By property P3, no three pairs in a
  $\gamma$-uncrossing can have the same cut; partition $\colltwo$ into
  $\colltwo_1$ and $\colltwo_2$ such that every pair $(A,B) \in \colltwo_1$
  has a unique cut $C'_{[j]}(A,B)$ and the same holds for $\colltwo_2$. We
  focus on $\colltwo_1$ for now. For any $(A,B)\in\colltwo_1$, define $Y(A,B)
  = W'_{[j]}(A,B)\cap E_R$. Define $\edgesets = \{Y(A,B): (A,B)\in
  \colltwo_1\}$. For any $X\in \edgesets$, define $f(X) = C'_{[j]}(A,B)$ for
  some arbitrary $(A,B)\in \colltwo_1$ such that $X=Y(A,B)$. The function $f$
  is one-one by construction, and since $(A,B)$ is $\gamma$-thick, we know
  that $\sum_{e\in X} 1/s'_e > \gamma \sum_{e\in f(X)} 1/s'_e$. Thus,
  $\edgesets$ satisfies the preconditions of
  Theorem~\ref{thm:Benczur-Karger}. Further, the sampling probability $p_e$ in
  step {\bf(S2)} of the algorithm is chosen to correspond to
  $\gamma=1/3$. Thus, with probability at least $1-1/n^2$, $X\cap E^{''}$ is
  non-empty for all $X\in \edgesets$, {\em i.e.}, $W'_{[j]}(A,B) \cap E_R \cap
  E^{''}\neq \emptyset$ for all $(A,B) \in \colltwo_1$. Since $G'_{[j]}$ is a
  subgraph of $G'$, we can conclude that $W'(A,B) \cap E_R \cap E^{''}\neq
  \emptyset$ for all $(A,B) \in \colltwo_1$ with probability at least
  $1-1/n^2$.

  Since the analogous argument holds for $\colltwo_2$, we obtain $W'(A,B) \cap
  E_R \cap E^{''}\neq \emptyset$ for all $(A,B) \in \colltwo$ with probability
  at least $1-2/n^2$. Since $\colltwo$ is a $\gamma$-uncrossing of
  $\coll_{[j]}$, we use property P1 to conclude that $W'(A,B) \cap E_R \cap
  E^{''}\neq \emptyset$ for all $(A,B) \in \coll_{[j]}$, again with
  probability at least $1-2/n^2$. Applying the union bound over all $j$, we
  further conclude that $W'(A,B) \cap E_R \cap E^{''}\neq \emptyset$ for all
  $(A,B) \in \coll$ with probability at least $1-2/n$. As mentioned before,
  this suffices to prove that $G^{''}$ has a perfect matching with probability
  at least $1-2/n$, by Lemma~\ref{lem:relevant-only}.  We assumed that
  condition 3 in theorem~\ref{thm:GKK09} is satisfied; this is violated with
  probability at most $\frac{1}{n}$, which proves that $G^{''}$ has a perfect
  matching with probability at least $1-{3\over n}$.
\end{proof}

As presented above, the algorithm takes time $\min\{\tilde{O}(n^{1.5}),
O(m)\}$ with high probability, and outputs a perfect matching with probability
$1- O(1/n)$. We conclude with two simple observations. First.  it is easy to
convert this into a Monte Carlo algorithm with a worst case running-time of
$\min\{\tilde{O}(n^{1.5}), O(m)\}$, or a Las Vegas algorithm with an expected
running-time of $\min\{\tilde{O}(n^{1.5}), O(m)\}$.  If either the sampling
process in steps {\bf (S1)} or {\bf (S2)} returns too many edges, or step {\bf
  (S3)} does not produce a perfect matching, then (a) abort the computation to
get a Monte Carlo algorithm, or (b) run the $O(m)$ time algorithm of Cole,
Ost, and Schirra~\cite{cos:regular2001} to get a Las Vegas algorithm. Second,
by choosing larger constants during steps {\bf (S1)} and {\bf (S2)}, it is
easy to amplify the success probability to be at least $1-O({1\over n^j})$ for
any fixed $j \ge 1$.

\section{An Improved $O\left(\min\{nd, (n^2\ln^3 n)/d\}\right)$ Bound on the
  Runtime} \label{sec:improved-runtime} 
 In this section we give an improved analysis of the runtime of the Hopcroft-Karp algorithm on the subsampled graph, ultimately leading to a bound of 
 $O\left(\min\{nd, (n^2\ln^3 n)/d\}\right)$ for our algorithm. The main ingredients of our analysis are (1) a decomposition of the graph $G$ into $O(n/d)$ vertex-disjoint $\Omega(d)$-edge-connected subgraphs,  (2) a modification of the uncrossing argument that reveals properties of sufficiently unbalanced witness sets in the sampled graph obtained in step \textbf{S2}, and (3) an upper bound on 	length of the shortest augmentating path in the sampled graph relative to any matching of size smaller than $n-2n/d$. 

\subsection{Combinatorial uncrossings}  
Theorem \ref{thm:ss} below, which we state for general bipartite graphs, requires a variant of the uncrossing theorem that we formulate now. We introduce the definition of combinatorial uncrossings:
\begin{definition}
\label{define:comp-uncross}
Let $\coll$ be any collection of pairs $(A, B), A\subseteq P, B\subseteq Q$. A combinatorial uncrossing of $\coll$ is a tuple $(\colltwo, \mathcal{I})$, where $\colltwo$ is another collection and $\mathcal{I}$ is a mapping from $\coll$ to subsets of $\colltwo$, such that the following properties are satisfied: 
\begin{description}
\item[Q1:] For all $(A, B)\in \coll$
\begin{enumerate}
\item $\left\lbrace W(A', B')\right\rbrace_{(A', B')\in \mathcal{I}(A, B)}$ are disjoint;
\item $\left\lbrace C(A', B')\right\rbrace_{(A', B')\in \mathcal{I}(A, B)}$ are disjoint;
\item $\left\lbrace A'\cup B'\right\rbrace_{(A', B')\in \mathcal{I}(A, B)}$ are disjoint;
\item $A'\subseteq A, B'\subseteq B$ for all $(A', B')\in \mathcal{I}(A, B)$;
\item 
\begin{equation*}
\begin{split}
W(A, B)=\bigcup_{(A', B')\in \mathcal{I}(A, B)} W(A', B')\\
C(A, B)=\bigcup_{(A', B')\in \mathcal{I}(A, B)} C(A', B').
\end{split}
\end{equation*}
\end{enumerate}
\item[Q2:](Half-injectivity) There cannot be three distinct pairs $(A, B), (A', B'), (A'', B'')$ in $\colltwo$ such that $C(A, B)=C(A', B')=C(A'', B'')$.
\end{description}
\end{definition}
The proof of existence of combinatorial uncrossings is along the lines of the proof of existence of $\gamma$-thick uncrossings, so we omit it here.

For a graph $H$ we denote $W_H(A, B)=W(A,B)\cap E(H)$ and $C_H(A, B)=C(A, B)\cap E(H)$, and omit the subscript when the
underlying graph is fixed.
\begin{theorem} \label{thm:ss}
Let $G^*$ be a graph obtained by sampling edges uniformly at random with
probability $p$ from a bipartite graph $G=(P, Q, E)$ on $2n$ vertices with a minimum cut of size
$\kappa$. Then there
exists a constant $c>0$ such that for all $\e>0$ if $p > \frac{c \ln n}{\e^2 \kappa}$
then w.h.p. for all $A \subseteq P$, and $B \subseteq Q$, we have
\begin{equation*}
\begin{split}
p|W_G(A,B)| - \e p| C_G(A,B) |\leq |W_{G^*}(A,B) |\leq p|W_G(A,B)| + \e p| C_G(A,B) |.
\end{split}
\end{equation*}
\end{theorem}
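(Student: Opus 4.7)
The plan is to deduce the uniform concentration bound for all pairs $(A,B)$ from a Karger-style union bound over a small ``atomic'' collection of pairs obtained by combinatorial uncrossing, then stitch the per-atom bounds back together using the disjointness guaranteed by the uncrossing.

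Concretely, let $\coll$ be the collection of all pairs $(A,B)$ with $A\subseteq P$, $B\subseteq Q$ (and nonempty cut), and let $(\colltwo, \mathcal{I})$ be a combinatorial uncrossing of $\coll$ in the sense of Definition~\ref{define:comp-uncross}; note that $\colltwo$ depends only on $G$, not on the random sample. Property Q1 gives, for every $(A,B)\in\coll$, disjoint partitions $W_G(A,B) = \bigsqcup_{(A',B')\in\mathcal{I}(A,B)} W_G(A',B')$ and $C_G(A,B) = \bigsqcup_{(A',B')\in\mathcal{I}(A,B)} C_G(A',B')$. Because edge sampling is independent, the first partition transfers to $G^*$, so $|W_{G^*}(A,B)| = \sum_{(A',B')\in\mathcal{I}(A,B)} |W_{G^*}(A',B')|$. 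Hence it suffices to prove that, simultaneously for all $(A',B')\in\colltwo$,
\[
\bigl|\,|W_{G^*}(A',B')| - p|W_G(A',B')|\,\bigr| \;\leq\; \e\, p\, |C_G(A',B')|,
\]
since the triangle inequality combined with the disjoint-cut decomposition of $C_G(A,B)$ then yields the claim for every $(A,B)$.

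For any single $(A',B')\in\colltwo$, $|W_{G^*}(A',B')|$ is a sum of $|W_G(A',B')|$ independent Bernoulli$(p)$ variables, and $W_G(A',B') \subseteq C_G(A',B')$ (both count edges leaving $A'$). An additive Chernoff / Bernstein estimate therefore bounds the failure probability by $\exp\bigl(-\Omega(\e^2 p\, |C_G(A',B')|)\bigr)$. Since $C_G(A',B')$ is a cut in $G$ we have $|C_G(A',B')| \geq \kappa$, and with $p \geq c\ln n/(\e^2\kappa)$ this is at most $n^{-\Omega(c)}$ per pair.

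The main obstacle, and the point where property Q2 is crucial, is the union bound over $\colltwo$. Bucket $\colltwo$ by the value $|C_G(A',B')|$: by Q2 at most two pairs in $\colltwo$ share a given cut, and by Karger's cut-counting theorem the number of cuts of $G$ of value at most $\alpha\kappa$ is at most $n^{2\alpha}$, so the number of pairs in $\colltwo$ with $|C_G(A',B')| \in [\alpha\kappa,(\alpha+1)\kappa)$ is at most $2 n^{2(\alpha+1)}$. The total failure probability is therefore bounded by
\[
\sum_{\alpha \geq 1} 2\, n^{2(\alpha+1)} \cdot \exp\bigl(-\Omega(c\,\alpha \ln n)\bigr),
\]
which is a geometric series that sums to $o(1/n)$ provided the absolute constant $c$ is chosen large enough that the exponent $-\Omega(c\alpha)$ dominates the $2\alpha+2$ term from Karger's bound. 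Without the half-injectivity in Q2 this bound would fail, since arbitrarily many pairs could be associated with a single cut; the combinatorial uncrossing is precisely what reduces the uniform concentration problem to a cut-counting argument.
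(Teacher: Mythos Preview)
Your proposal is correct and follows essentially the same approach as the paper: reduce via the combinatorial uncrossing $(\colltwo,\mathcal{I})$ to proving the concentration bound for every pair in $\colltwo$, apply Chernoff to each such pair to get failure probability $\exp(-\Omega(\e^2 p|C_G(A',B')|))$, then use half-injectivity (Q2) together with Karger's cut-counting bound to control the union bound, and finally use the disjoint decompositions in Q1 to lift the per-atom bounds to an arbitrary $(A,B)$. The only cosmetic difference is that the paper presents the two steps in the opposite order and cites Karger's result directly rather than writing out the bucketing over cut values.
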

\begin{proof}
Define $\coll$ as the set of pairs $(A, B), A\subseteq P\cap V(G), B\subseteq Q\cap V(G)$. Denote a combinatorial uncrossing of $\coll$ by $(\colltwo, \mathcal{I})$.
We first prove the statement for pairs from $\colltwo$, and then extend it to pairs from $\coll$ to obtain the desired result.

Consider a pair $(A, B)\in \colltwo$.  Denote $\Delta_G(A, B)=|W_{G^*}(A, B)|-p|W_G(A, B)|$. We shall write $W(A, B)$ and $C(A, B)$ instead of $W_G(A, B)$ and $C_G(A, B)$ in what follows for brevity. We have by Chernoff bounds that for a given pair $(A, B)\in \colltwo$ 
\begin{equation*}
\begin{split}
\prob\left[\left|\Delta_{G}(A, B)\right|>\e p |C(A, B)|\right]<\exp\left[-\left(\frac{\e|C(A, B)|}{|W(A, B)|}\right)^2 \frac{p|W(A, B)|}{2} \right]\\
\leq \exp\left[-\e^2 \left( \frac{p|C(A, B)|}{2} \right) \right]
\end{split}
\end{equation*}
since $|C(A, B)|\geq |W(A, B)|$. Since $\colltwo$ satisfies Q2, we get that
\begin{equation*}
\begin{split}
&\prob\left[\exists (A, B)\in \colltwo: \left|\Delta_G(A, B)\right|>\e p|C(A, B)|\right]\\
&<\sum_{W(A, B)\in W(\colltwo)} \exp\left[-\e^2 p|C(A, B)|/2\right]\leq 2\sum_{C(A, B)\in C(\colltwo)} \exp\left[-\e^2 p|C(A, B)|/2\right]=O(n^{-r})
\end{split}
\end{equation*}
for $c=2(r+2)$ by Corollary 2.4 in \cite{kar94a}. This implies that for $c \ge 2(r+2)$ we have with probability $1-O(n^{-r})$ for all $(A, B)\in \colltwo$
\begin{equation}\label{ab-bound}
|\Delta_G(A, B)| \leq \e p| C(A,B) |.
\end{equation} 

Now consider any pair $(A, B)\in \coll$. Summing \eqref{ab-bound} over all $(A', B')\in \mathcal{I}(A,B)$ and using properties Q1.1-5, we get
\begin{equation*} 
\begin{split} 
|\Delta_G(A, B)|&\leq \sum_{(A', B')\in \mathcal{I}(A, B)}\e p|C(A', B')|=\e p|C(A, B)|,
\end{split}
\end{equation*} 
for all $(A, B)\in \coll$ as required.
\end{proof}

\subsection{Decomposition of the graph $G$} \label{subsec:decomp}
Corollary \ref{cor:unbalanced-ws}, which relates
the size of sufficiently unbalanced witness sets in the sampled graph to the
size of the corresponding cuts is the main result of this subsection. It follows from theorem~\ref{thm:ss} and a stronger
(than~\cite{gkk:rbp08}) decomposition of bipartite $d$-regular graphs that we outline now.
\begin{theorem} \label{thm:decomposition}
Any $d$-regular graph $G$ with $2n$ vertices can be decomposed into
 vertex-disjoint induced subgraphs $G_1=(P_1,Q_1,E_1), G_2 = (P_2,Q_2,E_2), ...., G_k =(P_k,Q_k,E_k)$, where $k\leq 4n/d+1$, that satisfy the following properties:

\begin{enumerate}
 \item The minimum cut in each $G_i$ is at least $d/8$.


 \item $\sum_{i=1}^{k+1}  | \delta_G (V(G_i)) | \leq 2n$.
\end{enumerate}
\end{theorem}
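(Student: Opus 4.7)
The plan is to construct the decomposition by recursive splitting along small cuts. Starting with $G$ as the only subgraph under consideration, I would repeatedly check whether the current induced subgraph $H$ contains a cut of size strictly less than $d/8$; if so, I split $H$ along such a cut into $H[S]$ and $H[V(H) \setminus S]$ and recurse on both sides, otherwise I declare $H$ to be one of the output pieces $G_i$. Since the vertex count strictly decreases at each split, the process terminates with a collection of vertex-disjoint induced subgraphs $G_1, \ldots, G_k$, and by construction each $G_i$ has minimum cut at least $d/8$, establishing Property 1.

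For Property 2, I would charge the contribution to $\sum_{i} |\delta_G(V(G_i))|$ to the splits in the recursion tree. Every edge counted is a cross-piece edge of $G$ (with one endpoint in $V(G_i)$ and the other in some $V(G_j)$ for $j \neq i$) and hence was severed at some splitting step. Since each cross-piece edge contributes to the boundary of exactly two pieces and the recursion performs $k-1$ splits each removing strictly fewer than $d/8$ edges, we obtain
\begin{equation*}
\sum_{i} |\delta_G(V(G_i))| \;=\; 2 \cdot \bigl|\{\text{edges across distinct pieces}\}\bigr| \;<\; 2(k-1) \cdot \tfrac{d}{8} \;=\; \tfrac{(k-1)\, d}{4}.
\end{equation*}

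To bound $k$, I would use that each $G_i$ has minimum cut at least $d/8$, so every vertex of $P_i$ has at least $d/8$ neighbors in $Q_i$ and vice versa, giving $|P_i|, |Q_i| \ge d/8$ and hence $|V(G_i)| \ge d/4$. To tighten this to the claimed $k \le 4n/d + 1$ (which morally corresponds to an average piece size of roughly $d/2$), I would combine this with the facts that the boundary edges $|\delta_G(V(G_i))|$ are tightly controlled by the splits of the recursion and that each vertex of $V(G_i)$ has total degree exactly $d$ in $G$, through an amortized counting across the pieces. Plugging $k \le 4n/d + 1$ into the earlier bound then yields $\sum_{i} |\delta_G(V(G_i))| < n \le 2n$, establishing Property 2.

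The main technical obstacle is the tight piece-count bound. The naive min-cut/min-degree argument only yields $|V(G_i)| \ge d/4$ and hence $k \le 8n/d$; matching the stated $k \le 4n/d + 1$ requires an additional amortization that jointly leverages the $d$-regular structure of the ambient graph and the bounded cut sizes along the recursion, and I expect this to be the crux of the argument. A secondary subtlety is making the choice of cut at each step so as to ensure that the recursion does not create overly unbalanced splits, which is what lets the size lower bound on each final piece absorb the $+1$ slack in the count.
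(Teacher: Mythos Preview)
Your approach differs from the paper's in two key respects. First, the paper uses a cut threshold of $d/4$ (not $d/8$) and iteratively peels off at each step the \emph{smallest} vertex set $X_i \subset V(H_i)$ with $|\delta_{H_i}(X_i)| < d/4$; the min-cut bound of $d/8$ in each $G_i$ is then \emph{derived} (Lemma~\ref{lm:mincut}) from the minimality of $X_i$, rather than being immediate from the stopping rule. Second, and crucially, the paper bounds $k$ by a ``bad vertex'' amortization (Lemma~\ref{lm:edges-removed}): calling $v \in V(G_i)$ bad if its degree in $G_i$ is below $d/2$, they observe that a piece with no bad vertex has $|V(G_i)| \geq d$, and then charge bad vertices against removed cut edges to conclude that fewer than half of the pieces contain a bad vertex. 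This gives $\sum_i |V(G_i)| \geq (k/2)\cdot d$, hence $k \leq 4n/d$.

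Your sketch has a real gap at exactly the point you flag, and the weaker $k \leq 8n/d$ is not actually established either. The inference ``min-cut $\geq d/8$ implies $|P_i|,|Q_i| \geq d/8$'' needs both sides nonempty; a singleton piece has min-cut $\infty$ by convention but $|V(G_i)|=1$. Your recursion can in principle produce singletons: a vertex deep in the recursion tree may lose up to $d/8 - 1$ incident edges at each split along its root-to-leaf path, so after roughly eight splits its residual degree can drop below $d/8$ and it can be split off alone. Thus the per-piece lower bound $|V(G_i)| \geq d/4$ fails without further argument. The ``amortized counting'' you allude to is precisely where the content lies, and the paper's concrete instantiation---tracking vertices whose residual degree has dropped below $d/2$ and charging them to removed cut edges---is the missing idea. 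Note that the larger threshold $d/4$ is what makes this charging yield the tight constant: pieces with no bad vertex then have at least $d$ (not merely $d/4$) vertices.
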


To prove Theorem \ref{thm:decomposition}, we give a procedure that decomposes the graph $G$ into vertex-disjoint induced subgraphs $G_1(P_1, Q_1, E_1)$, $G_2(P_2, Q_2, E_2),\ldots, G_k(P_k, Q_k, E_k)$, $k\leq 4n/d+1$ such that the min-cut in $G_j$ is at least $d/8$ and at most $n$ edges run between pieces of the decomposition. 

The procedure is as follows. Initialize $H_1:=G$, and set $i:=1$.
\begin{enumerate}
\item Find a smallest proper subset $X_i\subset V(H_i)$ such that $|\delta_{H_i}(X_i)|< d/4$. If no such set exists, define $G_i$ to be the graph $H_i$ and terminate. 
\item Define $G_i$ to be the subgraph of $H_i$ induced by vertices in $X_i$, i.e. $X_i=P_i\cup Q_i=V(G_i)$. Also, define $H_{i+1}$ to be the graph $H_i$ with vertices from $X_i$ removed.
\item Increment $i$ and go to step 1.	
\end{enumerate}

We now prove that the output of the decomposition procedure satisfies the properties claimed above.
\begin{lemma}\label{lm:mincut}
The min-cut in $G_i$ is greater than $d/8$.
\end{lemma}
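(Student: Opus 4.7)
The plan is to split into two cases according to how the decomposition procedure produces $G_i$. In the \emph{terminal case}, the procedure sets $G_i := H_i$ precisely because there is no proper subset $X \subsetneq V(H_i)$ with $|\delta_{H_i}(X)| < d/4$; consequently every nontrivial cut of $G_i = H_i$ has at least $d/4 > d/8$ edges and there is nothing more to show.

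In the \emph{intermediate case}, $G_i$ is the subgraph induced by a smallest proper subset $X_i \subsetneq V(H_i)$ satisfying $|\delta_{H_i}(X_i)| < d/4$. Suppose, toward a contradiction, that the min-cut in $G_i$ is at most $d/8$, and let $Y$ be one side of such a cut, so that $\emptyset \neq Y \subsetneq X_i$ and $|\delta_{G_i}(Y)| \leq d/8$. Every edge of $\delta_{H_i}(Y)$ either stays inside $X_i$ (and is then exactly an edge of $\delta_{G_i}(Y)$) or leaves $X_i$ (and is then an edge of $\delta_{H_i}(X_i)$ incident to $Y$). Writing $a$ and $b$ for the number of edges of $\delta_{H_i}(X_i)$ incident to $Y$ and to $X_i \setminus Y$ respectively, we have $a + b = |\delta_{H_i}(X_i)| < d/4$, so $\min(a,b) < d/8$. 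After swapping $Y$ with $X_i \setminus Y$ if necessary (which leaves $|\delta_{G_i}(Y)|$ unchanged), we may assume $a < d/8$. Then
\[
|\delta_{H_i}(Y)| = |\delta_{G_i}(Y)| + a < d/8 + d/8 = d/4.
\]
Since $Y$ is a nonempty proper subset of $V(H_i)$ with $|Y| < |X_i|$, this contradicts the choice of $X_i$ as a \emph{smallest} proper subset of $V(H_i)$ with boundary less than $d/4$.

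The main subtlety is the clean decomposition $|\delta_{H_i}(Y)| = |\delta_{G_i}(Y)| + a$, together with the use of the strict inequality $|\delta_{H_i}(X_i)| < d/4$: without strictness we would only get $\min(a,b) \leq d/8$, yielding the non-strict bound $|\delta_{H_i}(Y)| \leq d/4$, which would fail to contradict the minimality of $X_i$. The WLOG swap of $Y$ and $X_i \setminus Y$ is also essential, since only the smaller of the two contributions $a,b$ can be shown to be $<d/8$. Everything else is bookkeeping.
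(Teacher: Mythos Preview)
Your proof is correct and follows essentially the same approach as the paper's: both argue by contradiction, split the boundary $\delta_{H_i}(X_i)$ into the parts $a$ and $b$ incident to the two sides of the small cut, use $a+b<d/4$ to get $\min(a,b)<d/8$, and then swap sides to contradict the minimality of $X_i$. Your treatment is slightly more explicit (you separately handle the terminal case $G_i=H_i$, which the paper only covers implicitly, and you spell out the identity $|\delta_{H_i}(Y)|=|\delta_{G_i}(Y)|+a$), but there is no substantive difference in strategy.
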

\begin{proof}
If $G_i$ contains a single vertex the min-cut is infinite by definition, so we assume wlog that $G_i$ contains at least two vertices.
The proof is essentially the same as the proof of property \textbf{P1} of the decomposition procedure in \cite{gkk:rbp08} (see Theorem 2.4).

Suppose that there exists a cut $(V, V^c)$ in $G_i$ where $V\subset V(G_i)$ and
$V^c=V(G_i)\setminus V$, such that $|\delta_{G_i}(V)|\leq d/8$ (note that it is possible that $V\cap P_i\neq \emptyset$ and $V\cap Q_i\neq \emptyset$).  We have $|\delta_{H_i}(V)\setminus \delta_{G_i}(V)|+|\delta_{H_i}(V^c)\setminus \delta_{G_i}(V^c)|< d/4$ by the choice of $X_i$ in (1). Suppose without loss of generality that $|\delta_{H_i}(V)\setminus \delta_{G_i}(V)|<d/8$.
Then $|\delta_{H_i}(V)|< d/4$ and $V\subset X_i$, which contradicts the choice of $X_i$ as the smallest cut of value at most $d/4$ in step (1) of the procedure.
\end{proof}

\begin{lemma} \label{lm:edges-removed}
The number of steps in the decomposition procedure is $k\leq 4n/d$, and at most $n$ edges are removed in the process.
\end{lemma}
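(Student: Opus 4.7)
The plan is to establish the iteration bound $k \leq 4n/d$ first, from which the edge bound follows immediately: since $|\delta_{H_i}(X_i)| < d/4$ at each non-terminal iteration, the total number of cut edges is $Y := \sum_{i<k}|\delta_{H_i}(X_i)| < (k-1)d/4$, so the bound on $k$ yields $Y < n$.

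To bound $k$, I will exploit the minimality of $X_i$: if $|X_i| \geq 2$, then no singleton $\{v\}$ can qualify, so every $v \in V(H_i)$ satisfies $\deg_{H_i}(v) \geq d/4$. Split iterations into two cases.

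\emph{Case A} ($|X_i| \geq 2$). Let $a = |X_i \cap P|$, $b = |X_i \cap Q|$, $t = |X_i| = a+b$. Combining the identity $\sum_{v \in X_i}\deg_{H_i}(v) = 2|E(G_i)| + |\delta_{H_i}(X_i)|$ with $\deg_{H_i}(v) \geq d/4$, the bipartite bound $|E(G_i)| \leq ab \leq t^2/4$, and $|\delta_{H_i}(X_i)| < d/4$, I obtain $t^2 \geq (t-1)d/2$, which forces $|X_i| \geq d/2$.

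\emph{Case B} ($X_i = \{v\}$). Here $\deg_{H_i}(v) < d/4$, so $v$ has ``deficit'' $d - \deg_{H_i}(v) > 3d/4$, meaning more than $3d/4$ of its $G$-neighbors already lie in earlier pieces.

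Since the sets $\{X_i\}_{i<k}$ together with $V(G_k)$ partition $V(G)$, Case A contributes at most $4n/d$ iterations. For Case B, the key identity is that each cut edge contributes exactly once to the double sum $\sum_{i}\sum_{v \in X_i}(d - \deg_{H_i}(v))$---it is charged to whichever of its two endpoints is removed later---so this sum is at most $Y$. Hence there are fewer than $4Y/(3d)$ Case B iterations. Combining, $k - 1 \leq 4n/d + 4Y/(3d)$; plugging in the relation $Y < (k-1)d/4$ yields a self-consistent bound $k = O(n/d)$, and in turn $Y \leq n$.

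The main obstacle is the bipartite counting in Case A that forces $|X_i| \geq d/2$: the argument is near-tight and relies on both the bipartite bound $|E(G_i)| \leq ab$ and the minimality-derived min-degree condition $\deg_{H_i} \geq d/4$. Matching the precise constants $4n/d$ and $n$ (rather than the $O(n/d)$ that falls out of the combined analysis in one shot) requires a careful balancing of the Case A and Case B contributions.
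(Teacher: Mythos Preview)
Your argument is sound and yields $k=O(n/d)$ and $Y=O(n)$, but it proceeds by a genuinely different route than the paper. The paper does not invoke the minimality of $X_i$ in this lemma at all. Instead it calls a vertex $v\in V(G_i)$ \emph{bad} when $\deg_{G_i}(v)<d/2$, observes that a piece with no bad vertex satisfies $|V(G_i)|\ge d$ (since min-degree $\ge d/2$ on each side forces $|P_i|,|Q_i|\ge d/2$), and then bounds the number of bad vertices across the first $j$ iterations by charging each one more than $d/2$ removed edges against a budget of fewer than $jd/4$; this gives fewer than $j/2$ bad vertices, so at least $j/2$ pieces have size $\ge d$, whence $jd/2\le 2n$ directly with no bootstrap.

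Your route replaces the bad-vertex dichotomy with a singleton/non-singleton dichotomy, and your deficit identity $\sum_i\sum_{v\in X_i}(d-\deg_{H_i}(v))=Y$ is a clean single-counting statement (each cut edge charged only to its later-removed endpoint) that the paper does not isolate. The price is exactly what you flag at the end: the bipartite edge-count in Case~A only forces $|X_i|\gtrsim d/2$ rather than $d$, and the self-referential step $(k-1)\le 4n/d+(k-1)/3$ loses another factor, landing at $k-1\le 6n/d$ and $Y<3n/2$. So as a proof of the lemma \emph{as literally stated} your plan does not deliver the constants $4n/d$ and $n$; your closing remark that ``careful balancing'' recovers them is optimistic, since the $d/2$ versus $d$ gap in Case~A is structural to your argument. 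As a proof of the $O(n/d)$ and $O(n)$ bounds actually needed downstream (adjusting the $2n/d$ threshold and the $|E_r|\le n$ step in Corollary~\ref{cor:unbalanced-ws} by constant factors), your approach is complete.
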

\begin{proof}
We call a vertex $v\in V(G_i)$ {\em bad} if its degree in $G_i$ is smaller than $d/2$. Note that for each $1\leq i\leq k$ either $G_i$ contains a bad vertex or $V(G_i)\geq d$. 

Note that since strictly fewer than $d/4$ edges are removed in each iteration, the number of bad vertices created in the first $j$ iterations is strictly less than $j(d/4)/(d/2)=j/2$.  Hence, during at least half of the $j$ iterations at least $d$ vertices were removed from the graph, i.e.
\begin{equation*}
\sum_{i=1}^j |V(G_i)|\geq (j/2)\cdot d=jd/2.
\end{equation*}
This implies that the process terminates in at most $4n/d$ steps, and the number of edges removed is at most $(4n/d)\cdot d/4=n$.
\end{proof}
\begin{proofof}{Theorem \ref{thm:decomposition}}
The proof follows by putting together lemmas \ref{lm:mincut} and \ref{lm:edges-removed}.
\end{proofof}

We overload notation here by denoting $W(B, A)=W(P\setminus A, Q\setminus B)=C(A, B)\setminus W(A, B)$ for $A\subseteq P, B\subseteq Q$. The main result of this subsection is
\begin{corollary} \label{cor:unbalanced-ws}
Let $G^*=(P, Q, E^*)$ be a graph obtained by sampling the edges of a $d$-regular bipartite graph $G=(P, Q, E)$ on $2n$ vertices independently with probability $p$. There exists a constant $c>0$ such that if $p>\frac{c\ln n}{\e^2 d}$ then whp for all pairs $(A, B), A\subseteq P, B\subseteq Q$, $|A|\geq |B|+2n/d$ one has that 
$|W(A,B)\cap E^*|>\frac{1-3\e}{1+2\e} |W(B, A)\cap E^*|$ for all $\e<1/4$. In particular, $G^*$ contains a matching of size at least $n-2n/d$ whp.
\end{corollary}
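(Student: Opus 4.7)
I would combine the decomposition of Theorem~\ref{thm:decomposition} with the per-piece concentration of Theorem~\ref{thm:ss}, and handle the crossing edges separately via a single global Chernoff bound.

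First, I would apply Theorem~\ref{thm:decomposition} to obtain vertex-disjoint induced subgraphs $G_1, \ldots, G_k$ with $k \le 4n/d + 1$, each of min-cut at least $d/8$, and a set $X$ of crossing edges with $|X| \le n$. By choosing the constant $c$ large enough, $p > c\ln n/(\epsilon^2 d)$ exceeds the hypothesis of Theorem~\ref{thm:ss} applied with $\kappa = d/8$ inside each $G_i$. Union-bounding over the $k \le n$ pieces, with high probability, for every $i$ and every $A' \subseteq P_i, B' \subseteq Q_i$,
\[|W_{G_i^*}(A', B')| \in p\,|W_{G_i}(A', B')| \pm \epsilon p\,|C_{G_i}(A', B')|.\]
Independently, a single Chernoff bound on the binomial variable $|X \cap E^*|$ (using $|X| \le n$) yields $|X \cap E^*| \le (1+\epsilon)pn$ with high probability. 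Condition on both events.

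Second, fix any pair $(A, B)$ with $|A| \ge |B| + 2n/d$. Let $A_i = A \cap P_i$, $B_i = B \cap Q_i$, $a = |W_G(A, B)|$, $b = |W_G(B, A)|$, and denote by $x_A, x_B$ the numbers of crossing edges in $W_G(A, B)$ and $W_G(B, A)$, so $x_A + x_B \le |X| \le n$. Writing $a_0 = a - x_A = \sum_i |W_{G_i}(A_i, B_i)|$, $b_0 = b - x_B$, and $c_0 = a_0 + b_0$, and summing the per-piece concentration gives
\[|W_{G^*}(A, B)| \ge p a_0 - \epsilon p c_0, \qquad |W_{G^*}(B, A)| \le p b_0 + \epsilon p c_0 + |X \cap E^*|.\]
By $d$-regularity, $a - b = d(|A|-|B|) \ge 2n$, and hence $a_0 - b_0 = (a-b) - (x_A - x_B) \ge 2n - x_A \ge n$.

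Third, the target ratio inequality $(1+2\epsilon)|W_{G^*}(A, B)| > (1 - 3\epsilon)|W_{G^*}(B, A)|$, after substituting $c_0 = a_0 + b_0$ and grouping the coefficients of $a_0$ and $b_0$, reduces to
\[p\bigl[(1 + \epsilon^2) a_0 - (1 - \epsilon - \epsilon^2) b_0\bigr] > (1 - 3\epsilon)\,|X \cap E^*|.\]
Plugging in $a_0 \ge b_0 + (2n - x_A)$, $|X \cap E^*| \le (1+\epsilon) pn$, and $x_A \le n$, the difference (LHS minus RHS) is bounded below by $p\epsilon(1+2\epsilon)(2n + b_0) > 0$, establishing the ratio inequality for every $\epsilon \in (0, 1/4)$. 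The matching claim then follows immediately: since $|W_{G^*}(A, B)| > 0$ for every pair with $|A| \ge |B| + 2n/d$, Hall's deficiency theorem gives a matching in $G^*$ of size at least $n - 2n/d$.

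The main obstacle is controlling the contribution of sampled crossing edges to $|W_{G^*}(B, A)|$ uniformly over $(A, B)$: the trivial bound $|W_G(B, A) \cap X \cap E^*| \le x_B$ is too loose, and any per-pair multiplicative Chernoff would require a prohibitive union bound. The key observation is that the decomposition's guarantee $|X| \le n$ permits a single $(A, B)$-independent Chernoff bound $|X \cap E^*| \le (1+\epsilon)pn$, and this bound is precisely tight enough to absorb the $(1-3\epsilon)$ factor on the RHS while leaving the strictly positive surplus $p\epsilon(1+2\epsilon)(2n+b_0)$.
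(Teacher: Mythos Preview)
Your proof is correct and follows essentially the same route as the paper: decompose $G$ via Theorem~\ref{thm:decomposition}, apply the per-piece concentration of Theorem~\ref{thm:ss} with $\kappa=d/8$, sum over pieces, and combine the bound on crossing edges with $a-b=d(|A|-|B|)\ge 2n$ to derive the ratio. The only notable difference is how the crossing edges are handled: you invoke one global Chernoff bound $|X\cap E^*|\le (1+\e)pn$ and then verify the target inequality by direct algebra in $a_0,b_0$, whereas the paper keeps everything deterministic via $|E_r|\le n$ and instead routes through an extra concentration estimate on $|C(A,B)\cap E^*|$ to convert the $\e p|C(A,B)|$ error term back into sampled quantities before rearranging.
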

\begin{proof}
Set $A_i=A\cap P_i, B_i=B\cap Q_i$, where $G_i=(P_i, Q_i, E_i)$ are the pieces of the decomposition obtained in Section \ref{subsec:decomp}. For each $(A_i, B_i)$ such that $G_i$ is not an isolated vertex we have by Lemma \ref{lm:mincut} and Theorem \ref{thm:ss}
\begin{equation*}
\begin{split}
\left||W_{G_i}(A_i, B_i)\cap E^*|-p|W_{G_i}(A_i, B_i)|\right|<\e p|C_{G_i}(A_i, B_i)|.\\
\end{split}
\end{equation*}
If $G_i$ is an isolated vertex, we have $|W_{G_i}(A_i, B_i)\cap E^*|=p|W_{G_i}(A_i, B_i)|=0$. Since the latter estimate is stronger than the former, we shall not consider the isolated vertices separately in what follows.

Adding these inequalities over all $i$ we get
\begin{equation}
\begin{split}
\sum_{i=1}^k|W_{G_i}(A_i, B_i)\cap E^*|&\geq p \sum_{i=1}^k|W_{G_i}(A_i, B_i)|-\e p\sum_{i=1}^k|C_{G_i}(A_i, B_i)|.
\end{split}
\end{equation}
Denote the set of edges removed  during the decomposition process by $E_r$. Denote $E_1=E_r\cap W(A, B)$ and $E_2=E_r\cap W(B, A)$. Since $|W(A, B)\cap E^*|=\sum_{i=1}^k|W_{G_i}(A_i, B_i)\cap E^*|+|E_1\cap E'|$ and $\sum_{i=1}^k|W_{G_i}(A_i, B_i)|=|W(A, B)|-|E_1|$, this implies
\begin{equation*}
|W(A, B)\cap E^*| \geq p |W(A, B)|-\e p |C(A, B)|-p|E_1|.
\end{equation*}
Likewise, since $W(B, A)=W(P\setminus A, Q\setminus B)$, we have
\begin{equation*}
|W(B, A)\cap E^*| \leq p |W(B, A)|+\e p |C(A, B)|+p|E_2|.
\end{equation*}
Since $|A|\geq |B|+2n/d$, we have $|W(A, B)|\geq |W(B, A)|+2n$, so 
\begin{equation*}
\begin{split}
|W(A, B)\cap E^*| &\geq p |W(A, B)|-\e p |C(A, B)|-p|E_1|\\
&\geq p (|W(B, A)|+2n)-\e p |C(A, B)|-p|E_1|-p|E_2|\\
&\geq |W(B, A)\cap E^*|-2\e p |C(A, B)|+p(2n-|E_r|)\\
&\geq |W(B, A)\cap E^*|-2\e p |C(A, B)|+pn.
\end{split}
\end{equation*}
By similar arguments $|C(A, B)\cap E^*|\geq (1-\e)p(|C(A, B)|-n)$, i.e. $p|C(A, B)|\leq \frac1{1-\e}|C(A, B)\cap E^*|+pn$. Hence, we have
\begin{equation*}
\begin{split}
|W(A, B)\cap E^*| &\geq |W(B, A)\cap E^*|-2\e p |C(A, B)|+pn\\
&\geq |W(B, A)\cap E^*|-\frac{2\e}{1-\e}|C(A, B)\cap E^*|+(1-2\e)pn\\
&\geq |W(B, A)\cap E^*|-\frac{2\e}{1-\e}\left(|W(A, B)\cap E^*|+|W(B, A)\cap E^*|\right)+(1-2\e) pn,
\end{split}
\end{equation*}
which implies
\begin{equation*}
\begin{split}
|W(A, B)\cap E^*| &>\frac{1-3\e}{1+\e}|W(B, A)\cap E^*|\\
\end{split}
\end{equation*}
for $\e<1/4$. This completes the proof.
\end{proof}

\begin{remark} \label{rmk:logd}
The result in corollary \ref{cor:unbalanced-ws} is tight up to an $O(\ln d)$ factor for $d=\Omega(\sqrt{n})$. 
\end{remark}
\begin{proof}
The following construction gives a lower bound of $n-\Omega\left(\frac{n}{d\ln d}\right)$. Denote by $G_{n, d}$ the graph from Theorem 4.1 in \cite{gkk:rbp08} and denote by $G^*_{n, d}$ a graph obtained by sampling edges of $G_{n, d}$ at the rate of $\frac{c\ln n}{d}$ for a constant $c>0$. Define the graph $G$ as $d$ disjoint copies of $G_{2d\ln d, d}$, and denote the sampled graph by $G^*$. Note that by Theorem 4.1 the maximum matching in each copy of $G^*_{2d\ln d, d}$ has size at most $2d \ln d-1$ whp, and since the number of vertices in $G$ is $N=2d^2\ln d$, the maximum matching in $G^*$ has size at most $N-\Omega\left(\frac{N}{d\ln d}\right)$ whp.
\end{proof}

\subsection{Runtime analysis of the Hopcroft-Karp algorithm}
In this section we derive a bound on the runtime of the Hopcroft-Karp algorithm on the subsampled graph obtained in step \textbf{S2} of our algorithm.
The main object of our analysis is the alternating level graph, which we now define. Given a partial matching of a graph $G=(P, Q, E)$, the alternating level graph is defined inductively. Define sets $A_j$ and $B_j$, $j=1,\ldots, L$ as follows. Let $A_0$ be the set of unmatched vertices in $P$ and let $B_0=\emptyset$. Then let $B_{j+1}=\Gamma(A_j)\setminus \left(\bigcup_{i<j} B_{i}\right)$, where  $\Gamma(A)$ is the set of neighbours of vertices in $A\subseteq V(G)$, and let $A_j$ be the set of vertices matched to vertices from $B_j$. The construction terminates when either $B_{j+1}$ contains an unmatched vertex or when $B_{j+1}=\emptyset$, and then we set $L=j$. We use the notation $A^{(j)}=\bigcup_{k\leq j} A_k, B^{(j)}=\bigcup_{k\leq j} B_k$. 
We now give an outline of the Hopcroft-Karp algorithm for convenience of the reader. Given a non-maximum matching, the algorithm starts by constructing the alternating level graph described above and stops when an unmatched vertex is found. Then the algorithm finds a maximal set of vertex-disjoint augmenting paths of length $L$ (this can be done by depth-first search in $O(m)$ time) and performs the augmentations, thus completing one augmentation phase. It can be shown that each augmentation phase increases the length of the shortest augmenting path. Standard analysis of the run-time for general bipartite graphs is based on the observation that once $\sqrt{n}$ augmentations have been performed, the constructed matching necessarily has size at most $\sqrt{n}$ smaller than the maximum matching.  

We denote the graph obtained by sampling edges of $G$ independently with probability $p=\frac{c\ln n}{d}$ for a constant $c>0$ by $G^*$. Note that $G^*$ is obtained from $G$ by \textit{uniform} sampling. We will make the connection to non-uniform sampling in Theorem \ref{thm:runtime}. For $A\subseteq V(G)$ denote the set of edges in the cut $(A, V(G)\setminus A)$ in $G$ by $\delta(A)$ and the set of edges in the same cut in $G^*$ by $\delta^*(A)$.
Similarly, we denote the vertex neighbourhood of $A$ in $G$ by $\Gamma(A)$ and the vertex neighbourhood in $G^*$ by $\Gamma^*(A)$.
 We consider the alternating level graph in $G^*$ and prove that whp for any partial matching of size smaller than $n-2n/d$ for
each $1\leq j\leq L$ either $|B_{j-1}\cup B_j\cup B_{j+1}|=\Omega(d)$ or $B_j$ expands by at least a factor of $\ln n$ in
either forward or backward direction ($|B_{j+1}|\geq (\ln n)|B_j|$ or $|B_{j-1}|\geq (\ln n) |B_j|$). This implies that $L=O\left(\frac{n\ln
    d}{d\ln \ln n}\right)$, thus yielding the same bound on the length of
the shortest augmenting path by virtue of corollary \ref{cor:unbalanced-ws}. The main technical result of this subsection is

\begin{lemma} \label{lm:main} Let the graph $G^*$ be obtained from the
  bipartite $d$-regular graph $G$ on $2n$ vertices by uniform sampling with
  probability $p$. There exist constants $c>0, \e>0$ such that if $p\geq
  \frac{c\ln n}{\e^2 d}$, then whp for any partial matching in $G^*$ of size
  smaller than $n-2n/d$ there exists an augmenting path of length
  $O\left(\frac{n\ln d}{d\ln \ln n}\right)$.
\end{lemma}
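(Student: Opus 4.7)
The plan is to bound the length $L$ of the shortest augmenting path in $G^*$ by a layer-by-layer analysis of the Hopcroft--Karp BFS. Let the alternating level graph have layers $A_0, B_1, A_1, B_2, \ldots, B_L$, where $A_0 \subseteq P$ denotes the unmatched vertices. Since the matching has size below $n - 2n/d$ and Corollary \ref{cor:unbalanced-ws} guarantees that $G^*$ admits a matching of size at least $n - 2n/d$, we have $|A_0| \geq 2n/d$, so an augmenting path exists.

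The core technical claim is a dichotomy for every $j$: either (i) $|B_{j-1} \cup B_j \cup B_{j+1}| = \Omega(d)$ (a ``wide'' layer), or (ii) $\max(|B_{j-1}|, |B_{j+1}|) \geq (\ln n)\,|B_j|$ (an ``expanding'' layer). To prove this, I would apply Corollary \ref{cor:unbalanced-ws} to the pair $(A^{(j)} \cup A_0, B^{(j)})$, which satisfies $|A|-|B| = |A_0| \geq 2n/d$. By the BFS property, $W_{G^*}(A,B)$ is precisely the set of $G^*$-edges from $A_j$ to $B_{j+1}$, while $W_{G^*}(B,A)$ collects $G^*$-edges from $B^{(j)}$ to $P$-vertices matched outside $B^{(j)}$; the corollary gives a constant-ratio comparison between the two. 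In the non-wide case, the small size of the layers relative to the piecewise min cut $\Omega(d)$ (from Theorem \ref{thm:decomposition}) ensures via a piecewise application of Theorem \ref{thm:ss} that typical vertices in $B_{j+1}$ carry only $O(1)$ sampled edges from $A_j$, so the $\Omega(\ln n \cdot |B_j|)$ edge count from the corollary translates to a $\Omega(\ln n \cdot |B_j|)$ vertex count in $B_{j+1}$; if instead $B_{j-1}$ absorbs most of the sampled ``forward'' edges, the symmetric backward expansion $|B_{j-1}| \geq (\ln n)|B_j|$ holds.

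Given the dichotomy, the bound $L = O(n \ln d/(d \ln \ln n))$ follows by a counting argument. Each wide layer contributes $\Omega(d)$ to the total $\sum_k |B_k| \leq n$ (each term counted at most thrice), so there are at most $O(n/d)$ wide layers. Within any maximal block of expanding layers, the $\ln n$-expansion requirement forbids a strict interior maximum (both neighbors would have to be $\ln n$ times larger, impossible), so the block's maximum must be adjacent to a wide boundary, and iterating this shows the layer sizes decay geometrically by a factor of $\ln n$ inward, limiting the block length to $O(\ln d/\ln \ln n)$ before sizes would fall below $1$. Thus $L \leq O(n/d) \cdot O(\ln d/\ln \ln n)$. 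The main obstacle will be the $\ln n$-factor vertex expansion in case (ii): naively dividing the edge-count bound by the $O(\ln n)$ maximum $G^*$-degree forfeits the entire factor and yields only constant expansion, and recovering it requires the piecewise near-injectivity argument above to convert edge counts to vertex counts with no loss beyond a constant.
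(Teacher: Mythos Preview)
Your high-level architecture matches the paper exactly: the same five-way dichotomy (one of $B_{j-1},B_j,B_{j+1}$ is $\Omega(d)$, or $B_j$ expands by $\ln n$ forward or backward), the same application of Corollary~\ref{cor:unbalanced-ws} to the pair $(A^{(j)},B^{(j)})$ to lower-bound the forward edge count, and the same counting argument to get $L=O\bigl(\tfrac{n\ln d}{d\ln\ln n}\bigr)$. You also correctly isolate the real difficulty: turning an $\Omega(pd\,|B_j|)$ \emph{edge} count from $A_j$ into $B_{j+1}$ into an $\Omega(pd\,|B_j|)$ \emph{vertex} count for $|B_{j+1}|$ without losing the $\ln n$ factor.

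The gap is in how you propose to close that step. Theorem~\ref{thm:ss} is purely an edge-concentration statement: it controls $|W_{G^*}(A,B)|$ up to an additive $\epsilon p\,|C_G(A,B)|$, but says nothing about how those sampled edges are distributed among the vertices of $B_{j+1}$. Your claim that ``typical vertices in $B_{j+1}$ carry only $O(1)$ sampled edges from $A_j$'' is not a consequence of Theorem~\ref{thm:ss} applied piecewise; trying to extract it by taking, say, $B=\Gamma^*(A_j)$ and back-solving only yields $|\Gamma^*(A_j)|=\Omega(|A_j|/\epsilon)$, and pushing $\epsilon$ down to $1/\ln n$ to get the needed factor blows up the required sampling rate to $\Omega(\ln^3 n/d)$, which is more than you have. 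The paper does \emph{not} use Theorem~\ref{thm:ss} for this step at all. It proves a separate vertex-expansion lemma (Lemma~\ref{lm:expansion}): for any $A$ with $|A|\le t/p$, one has $|\Gamma^*(A)|\ge(1-\epsilon)\gamma(t)\,pd\,|A|$, proved by a direct Chernoff\,+\,union bound on the indicators $X_b=\mathbf{1}\{b\in\Gamma^*(A)\}$, using that each $b$ has at most $|A|\le t/p$ edges to $A$ so $\Pr[X_b=1]\ge \gamma(t)\,p k_b$. The paper then splits on whether $|A_j|<1/(5p)$ (apply the lemma; the forward-edge lower bound caps how many neighbors can lie outside $B_{j+1}$, forcing $|B_{j+1}|\ge(\ln n)|B_j|$) or $|A_j|\ge 1/(5p)$ (pass to a subset of that size and get $|B_{j+1}|\ge d/500$). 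Your intuition about ``$O(1)$ sampled edges per vertex'' is exactly the germ of Lemma~\ref{lm:expansion}, but it needs to be stated and proved as a standalone whp vertex-expansion bound, not routed through Theorem~\ref{thm:ss}.
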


The following expansion property of the graph $G^*$ will be used to prove lemma \ref{lm:main}:
\begin{lemma} \label{lm:expansion}
Define $\gamma(t)=(1-\exp(-t))/t$. For all $t>0$ there exists a constant $c>0$ that depends on $t$ and $\e$ such that if $G^*$ is obtained by sampling the edges of $G$ independently with probability $p>\frac{c\ln n}{d}$, then whp for every set $A\subseteq P$, $|A|\leq t/p$ (resp. $B\subseteq Q$, $|B|\leq t/p$) 
\begin{equation*}
\begin{split}
|\Gamma^*(A)|\geq  (1-\e) d p \gamma(t) |A|.
\end{split}
\end{equation*}
\end{lemma}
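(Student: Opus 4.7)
The plan is to prove the bound on $|\Gamma^*(A)|$ for each fixed $A \subseteq P$ via a Chernoff argument, and then take a union bound over all admissible $A$. Fix $A$ with $|A| = a \leq t/p$, and for each $v \in Q$ let $k_v$ be the number of edges of $G$ from $A$ to $v$; by $d$-regularity $\sum_v k_v = da$, and $k_v$ lies in $[0, K]$ where $K := \min(a, d)$. The indicator variables $X_v := \mathbf{1}[v \in \Gamma^*(A)]$ are mutually independent because they depend on disjoint sets of edges, and each is Bernoulli with success probability $1 - (1-p)^{k_v} \geq 1 - e^{-pk_v}$.

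To lower bound $\expect[|\Gamma^*(A)|] = \sum_v (1 - (1-p)^{k_v})$, I would use concavity of $f(k) := 1 - e^{-pk}$ together with $f(0) = 0$: this gives $f(k) \geq (k/K) f(K)$ for every $k \in [0, K]$, so
\[
\expect[|\Gamma^*(A)|] \;\geq\; \sum_v f(k_v) \;\geq\; \frac{f(K)}{K} \sum_v k_v \;=\; \frac{da}{K}\bigl(1 - e^{-pK}\bigr).
\]
Under the hypothesis $a \leq t/p$ one has $pK \leq pa \leq t$ (since $K \leq a$), and since $x \mapsto (1 - e^{-x})/x$ is monotonically decreasing, $(1 - e^{-pK})/(pK) \geq \gamma(t)$. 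This yields the clean bound $\expect[|\Gamma^*(A)|] \geq dp\,\gamma(t)\,a$.

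A standard multiplicative Chernoff bound applied to $\sum_v X_v$ then gives
\[
\Pr\bigl[|\Gamma^*(A)| < (1-\e) dp \gamma(t) a\bigr] \;\leq\; \exp\bigl(-\tfrac{1}{2}\e^2 dp \gamma(t) a\bigr).
\]
Union-bounding over all $A \subseteq P$ of each size $1 \leq a \leq t/p$, the total failure probability is at most $\sum_a \binom{n}{a} \exp(-\e^2 dp \gamma(t) a/2) \leq \sum_a \exp\bigl(a(\ln n - \e^2 dp \gamma(t)/2)\bigr)$. Choosing $c$ large enough in terms of $t$ and $\e$ so that $c \e^2 \gamma(t)/2 \geq 2$ forces $dp \geq c \ln n$ to make each exponent at most $-a \ln n$, so the total is $O(1/n)$. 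The symmetric claim for $B \subseteq Q$ follows by interchanging $P$ and $Q$.

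The only delicate step is the expectation lower bound; the worst case is when $A$ concentrates all its $da$ edges on as few neighbors as possible, each receiving the maximum number $K$, and the concavity inequality $f(k) \geq (k/K) f(K)$ is precisely what is needed to handle this extremal configuration uniformly in both regimes $a \leq d$ and $a > d$.
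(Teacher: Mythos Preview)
Your proof is correct and follows essentially the same approach as the paper: lower-bound $\expect[|\Gamma^*(A)|]$ by $dp\,\gamma(t)\,|A|$ via the inequality $1-e^{-pk}\ge pk\,\gamma(t)$ for $pk\le t$, then apply a Chernoff bound and union over the $\le n^a$ sets of each size $a$. Your concavity step $f(k)\ge (k/K)f(K)$ followed by monotonicity of $(1-e^{-x})/x$ is just a two-step reformulation of the same pointwise bound the paper uses directly (since $k_v\le |A|\le t/p$ already gives $pk_v\le t$), so the introduction of $K=\min(a,d)$ is harmless but not needed.
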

\begin{proof}
Consider a set $A\subseteq P$, $|A|\leq t/p$. For $b\in \Gamma(A)$ denote the indicator variable corresponding to the event that at least one edge incident on $b$ and going to $A$ is sampled by $X_{b}$, i.e. $X_b=I_{\lbrace b\in \Gamma^*(A)\rbrace}$. Denote the number of edges between $b$ and vertices of $A$ by $k_b$. We have 
\begin{equation*}
\prob[X_b=1]=1-(1-p)^{k_b}\geq 1-\exp(-k_b p)\geq k_b p \gamma(t),
\end{equation*}
since $k_b p\leq t$ and $e^{-x}\leq 1-\gamma(t) x$ for $x\in [0, t]$.

Hence, 
\begin{equation}  \label{xb-expect}
\expect\left[\sum_{b\in B} X_b\right]\geq p\sum_{b\in B} k_b \geq p|\delta(A)| \gamma(t).
\end{equation}

There are at most $n^s$ subsets $A$ of $P$ of size $s$ and $|\delta(A)|=d|A|$ for all $A$, so we obtain using Chernoff bounds and the union bound
\begin{equation*}
\begin{split}
\prob\left[\exists~A\subseteq P: |\Gamma^*(A)|<(1-\e)pd|A|\gamma(t)\right]<\sum_{s=1}^n n^s\exp\left(-\e^2 p d s \gamma(t)\right)\\
=\sum_{s=1}^n \exp\left(s(1-c \gamma(t))\ln n\right)=O(n^{2-c \gamma(t)}),
\end{split}
\end{equation*}
which can be made $O(n^{-r})$ by choosing $c>(2+r)/\gamma(t)$ for any $r>0$.
\end{proof}

\begin{proofof}{Lemma \ref{lm:main}}
First note that since the partial matching is of size strictly less than $n-2n/d$, by Corollary \ref{cor:unbalanced-ws} there exists an augmenting path with
 respect to the partial matching.

In order to upperbound the length of the shortest augmenting path, we will show that for each $j$, at least one of the following is true:  

\begin{enumerate}
\item $|B_j|\geq d/500$;
\item $|B_{j+1}|\geq d/500$;
\item $|B_{j+1}|\geq (\ln n)|B_{j}|$;
\item $|B_{j-1}|\geq d/500$; 
\item $|B_{j-1}|\geq (\ln n)|B_{j}|$.
\end{enumerate}

It then follows that for each $j$  there exists $j'$ such that $|j-j'|\leq 1+\ln_{\ln n} d$ and $|B_{j'}|\geq d/500$. Hence, there cannot be more than $O\left(\frac{n \ln d}{d \ln \ln n}\right)$ levels in the alternating level graph, so there always exists an augmenting path of length $O\left(\frac{n\ln d}{d\ln \ln n}\right)$.

For each $1\leq j\leq L$, where $L$ is the number of levels in the alternating level graph, we classify the edges leaving $B_j$ into three classes:
(1) $E_{F}$ contains edges that go to $U\setminus A^{(j)}$, (2) $E_M$ contains edges that go to $A_j$, and (3) $E_R$ contains edges that go to $A_{j-1}$.
At least one of $E_F, E_M, E_R$ has at least $(1-\e)p d|B_j|/3$ edges by Lemma \ref{lm:expansion}. We now consider each of these possibilities.

\textbf{Case (A):}
First suppose that $E_F$ contains at least
 $(1-\e)p d|B_j|/3$ edges. Note that since the partial matching has size smaller than $n-2n/d$ by assumption, we have that $|A^{(j)}|\geq |B^{(j)}|+2n/d$. Hence, by Corollary \ref{cor:unbalanced-ws} the number of edges going from $A_j$ to $B_{j+1}$ is at least 
\begin{equation*}
\frac{(1-3\e)(1-\e)}{1+\e} pd|A_j|/3.
\end{equation*}

Suppose first that $|A_j|<1/(5p)$. Then by Lemma \ref{lm:expansion} one has that $|\Gamma^*(A_j)|\geq (1-\e)\gamma(1/5) pd|A_j| $. Let $\beta^*=1+\e-\frac{(1-3\e)(1-\e)}{3(1+\e)}$. Observe that since one edge going out of $A_j$ yields at most one neighbor, at most $(1+\e)pd|A_j|-\frac{(1-3\e)(1-\e)}{1+\e} pd |A_j|/3=\beta^*pd |A_j|$ neighbours of vertices of $A_j$ are outside $B_{j+1}$.  Setting $\e=1/15$, we get that $B_{j+1}$ contains at least $((1-\e)\gamma(1/5)-\beta^*)pd |A_j|>0.011 pd |A_j|>(\ln n) |A_j|$ neighbours of $|A_{j}|$, i.e. $|B_{j+1}|\geq (\ln n) |A_j|=(\ln n) |B_j|$ (this corresponds to case 3 above). Now if $|A_j|\geq 1/(5p)$, one can find $A'\subseteq A_j$ such that $|A'|=\lfloor 1/(5p)\rfloor$ and at least $\frac{(1-3\e)(1-\e)}{1+\e} pd|A'|/3$ edges going out of $A'$ go to $B_{j+1}$, which implies by the same argument that $|B_{j+1}|\geq 0.011pd|A'|\geq d/500$ (this corresponds to case 2 above).

\textbf{Case (B):} Suppose that $E_M$ contains at least $(1-\e)pd|B_j|/3$ edges. Then by the same argument as in the previous paragraph (after first weakening our estimate to $\frac{(1-3\e)(1-\e)}{1+\e} pd|B_j|/3$) we have that $|A_{j}|\geq (\ln n) |B_j|$ if $|B_j|\leq 1/(5p)$. This is impossible when $\ln n>1$ since $|A_j|=|B_j|$. Hence, $|A_{j}|\geq d/500$ by same argument as above, and hence $|B_j|\geq d/500$ (this corresponds to case 1 above). 

\textbf{Case (C):} Suppose that $E_R$ contains at least $(1-\e)pd|B_j|/3$ edges.  By the same argument as above we have that either $|B_{j-1}|\geq d/500$ (this corresponds to case 5 above) or $B_{j-1}\geq (\ln n) |B_j|$ (this corresponds to case 4 above).

This completes the proof.
\end{proofof}

We can now prove the main result of this section:
\begin{theorem}\label{thm:runtime}
  Let the graph $G^*$ be obtained from $G$ using steps \textbf{S1} and
  \textbf{S2} in the algorithm of section \ref{sec:algo}. Then step
  \textbf{S3} takes $O\left(\frac{n^2\ln^2 n}{d \ln \ln n}\right)$ time
  whp, giving a time of $O\left(\frac{n^2 \ln^3 n}{d}\right)$ for the entire
  algorithm.
\end{theorem}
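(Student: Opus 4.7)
My plan is to combine Lemma \ref{lm:main} (which bounds the shortest augmenting-path length in a uniformly subsampled $d$-regular graph) with the standard Hopcroft-Karp analysis, after first arguing that the non-uniformly sampled graph $G''$ of step \textbf{S2} inherits the required augmenting-path guarantee.

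The first step is to extend Lemma \ref{lm:main} from uniform sampling to the two-stage sampling used by the algorithm. Each edge $e$ of $G$ appears in $G''$ with probability $p_1\cdot p_e$, where $p_1=c_1 n\ln n/d^2$ is the uniform rate of \textbf{S1} and $p_e=\min\{1,c\ln n/(\gamma s_e^{G'})\}$ is the Bencz\'ur--Karger rate in $G'$. A Chernoff bound shows that whp the maximum degree in $G'$ is $O(p_1 d)$, and since the strength of any edge is bounded by the minimum endpoint degree, we have $s_e^{G'}=O(p_1 d)$ for every $e\in G'$. Hence the effective per-edge probability of being chosen into $G''$ is $\Omega(\ln n/d)$. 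Conditioning on $G'$ (so that in the \textbf{S2} step edges are sampled independently) and replaying the proofs of Lemmas \ref{lm:expansion} and \ref{lm:main}, we obtain that whp, for every partial matching of $G''$ of size less than $n-2n/d$, there is an augmenting path of length $O(n\ln d/(d\ln\ln n))$.

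The second step is the Hopcroft-Karp analysis itself. By Theorem \ref{thm:correctness}, $G''$ has a perfect matching whp. The standard monotonicity of the algorithm guarantees that the length of the shortest augmenting path strictly increases from phase to phase (by at least $2$, since augmenting-path lengths are odd). Combined with the above bound, this yields that after $O(n\ln d/(d\ln\ln n))$ phases the matching has size at least $n-2n/d$; at most $2n/d$ further phases finish the remaining augmentations. Thus the total number of phases is $O(n\ln d/(d\ln\ln n))=O(n\ln n/(d\ln\ln n))$ in the regime of interest $d=\Omega(\sqrt n\ln n)$.

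Finally, by Theorem \ref{thm:Benczur-Karger2} the graph $G''$ has $O(n\ln n)$ edges whp, so each Hopcroft-Karp phase runs in $O(n\ln n)$ time. Multiplying yields a cost of $O(n^2\ln^2 n/(d\ln\ln n))$ for step \textbf{S3}; adding the $O(n^2\ln^3 n/d)$ cost of \textbf{S1} and \textbf{S2} gives the claimed total running time. The technical crux is the first step: porting the proof of Lemma \ref{lm:main} from uniform to non-uniform sampling, where the per-edge probabilities depend on the random strengths in $G'$. The conditioning-on-$G'$ approach together with the high-probability bound on $s_e^{G'}$ makes this go through cleanly, while the rest of the argument is a fairly mechanical combination of the preceding lemmas with the textbook Hopcroft-Karp analysis.
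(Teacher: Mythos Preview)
Your proposal follows the same high-level two-stage analysis as the paper: (i) show that the effective per-edge rate into $G''$ is $\Omega(\ln n/d)$ and port Lemma~\ref{lm:main} to $G''$; (ii) bound the number of Hopcroft--Karp phases by $O\!\left(\frac{n\ln d}{d\ln\ln n}\right)+O(n/d)$; (iii) multiply by the $O(n\ln n)$ edge bound for the cost per phase. Steps (ii) and (iii) match the paper exactly, including the split at matching size $n-2n/d$.

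The difference, and the gap, is in how you carry out (i). You propose to ``condition on $G'$ and replay the proofs of Lemmas~\ref{lm:expansion} and~\ref{lm:main}.'' But once you condition on $G'$, step \textbf{S2} is sampling from $G'$, not from the $d$-regular graph $G$: every edge of $G\setminus G'$ now has probability zero. The proofs you want to replay---in particular Corollary~\ref{cor:unbalanced-ws}, on which Lemma~\ref{lm:main} crucially relies---use the exact $d$-regularity of $G$ and the decomposition of Theorem~\ref{thm:decomposition}; these do not automatically transfer to the random, only approximately regular graph $G'$. So ``replaying'' is not the clean step you describe.

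The paper handles this via a different device. From the bound $s_e^{G'}=O(p_1 d)$ it concludes that the two-stage sampling \emph{stochastically dominates} uniform i.i.d.\ sampling of $G$ at rate $q=\Theta(\ln n/d)$, and couples the process so that $G''=E^*\cup E^{**}$ with $E^*$ a genuine uniform sample of $E(G)$ at rate $q$. Lemma~\ref{lm:main} then applies verbatim to $G^*=(P,Q,E^*)$. The final observation is that the proof of Lemma~\ref{lm:main} uses only \emph{lower} bounds on edge and vertex-neighbour counts in the sampled graph; since these are monotone under adding edges, the conclusion (short augmenting paths for any matching of size $<n-2n/d$) carries from $G^*$ to the supergraph $G''$. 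This coupling-plus-monotonicity step is precisely what your conditioning argument is missing.
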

\begin{proof}
We analyze the runtime of step \textbf{S3} in two stages: (1) finding a matching of size $n-2n/d$, and (2) extending the matching of size $n-2n/d$ to a perfect matching.

Note that the strength of edges in $G'$ obtained after \textbf{S1} does not exceed $\frac{(1+\e)cn\ln n}{\e^2 d}$, the maximum degree, with high probability, for a constant $c>0$. Hence, the combination of sampling uniformly in \textbf{S1} and non-uniformly in \textbf{S2} dominates sampling each edge with probability $\Omega\left(\frac{\ln n}{d}\right)$, so we write $G''=(P, Q, E^*\cup E^{**})$, where $E^*$ is obtained from $E$ by sampling uniformly with probability $p=\frac{c\ln n}{d}$ for a sufficiently large $c>0$. The constant $c>0$ can be made sufficiently large so that Lemma \ref{lm:main} applies by adjusting the constant in the sampling in steps \textbf{S1} and \textbf{S2}. Denote $G^*=(P, Q, E^*)$ and note that the proof of Lemma \ref{lm:main} only uses lower bounds on the number of edges incident to vertices in a given set, as well as the number of vertex neighbours of a set of vertices. Hence, since all bounds apply to $G^*$, the conclusion of the lemma is valid for $G''$ whp as well, and we conclude that the maximum number of layers in  an alternating level graph, and hence the length of the shortest augmenting path, is $O\left(\frac{n\ln d}{d \ln \ln n}\right)$. As each augmentation phase takes time proportional to the number of edges in the graph, this implies that the first stage takes $O\left(\frac{n^2\ln^2 n}{d \ln \ln n}\right)$.

Finally, note that each augmentation phase increases the size of the matching by at least $1$, and thus $O(n/d)$ augmentation suffice to extend the matching constructed in the first stage to a perfect matching. This takes $O\left(\frac{n^2\ln n}{d}\right)$ time, so the runtime is $O\left(\frac{n^2\ln^2 n}{d \ln \ln n}\right)$ for step \textbf{S3}, and $O\left(\frac{n^2\ln^3 n}{d}\right)$ overall.
\end{proof}

\begin{remark}
Theorem \ref{thm:runtime} as well as lemma \ref{lm:main} can be slightly altered to show that the runtime of the Hopcroft-Karp algorithm on the subsampled graph from \cite{gkk:rbp08} is $O\left(\frac{n^3\ln^2 n}{d^2\ln \ln n}\right)$. This shows that the approach in \cite{gkk:rbp08} yields an $\tilde O(n^{5/3})$ algorithm, which is better than $O(n^{1.75})$ stated in \cite{gkk:rbp08}.
\end{remark}

\begin{theorem}\label{thm:lowerbound}
For any function $d(n)\geq 2\sqrt{n}$ there exists an infinite family of $d(n)$-regular graphs with $2n+o(n)$ vertices such that whp the algorithm in section \ref{sec:algo} performs $\Omega(n/d)$ augmentations in the worst case. 
\end{theorem}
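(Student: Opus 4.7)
The plan is to exhibit an infinite family of $d$-regular bipartite graphs on $2n+o(n)$ vertices such that, with high probability over the two-stage sampling in steps \textbf{S1} and \textbf{S2}, the graph $G''$ forces the Hopcroft-Karp algorithm in step \textbf{S3} to execute $\Omega(n/d)$ augmentation phases. The starting point is the same building block used in the proof of Remark \ref{rmk:logd}: the $d$-regular bipartite graph $G_{h,d}$ from Theorem 4.1 of \cite{gkk:rbp08}, which has the property that under uniform edge sampling at rate $\Theta(\ln n/d)$, its maximum matching has a deficit of at least one with high probability.

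Concretely, I would take $G$ to be a disjoint union of $k=\Theta(n/d)$ copies of small $d$-regular bipartite ``defect gadgets'' $H_1,\ldots,H_k$ of varying sizes $h_1<h_2<\cdots<h_k$, chosen so that $\sum_i h_i=n+o(n)$. Each gadget $H_i$ is built from the $G_{h_i,d}$ construction, possibly padded with $K_{d,d}$-like structure, so that after sampling at rate $\Theta(\ln n/d)$ the induced piece in $G''$ has a matching deficit and its unique shortest augmenting path (relative to a near-maximum partial matching of that component) has length $2i-1$. Since the copies are vertex-disjoint, augmenting paths never cross components, and the set of shortest-augmenting-path lengths that appear across the algorithm's intermediate matchings contains the $k$ distinct odd values $\{1,3,\ldots,2k-1\}$; because the shortest augmenting path length is strictly increasing across phases of Hopcroft-Karp, this forces at least $k=\Omega(n/d)$ phases. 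The constraint $d\ge 2\sqrt n$ is exactly what makes the arithmetic $\sum_i h_i=\Theta(\sum_i id)=\Theta(k^2 d)=\Theta(n^2/d)$ fit inside the budget $n+o(n)$, since $n^2/d\le n$ precisely when $d\ge n^{1/2}$.

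I expect two parts to require care. First, one must verify that the non-uniform Bencz\'{u}r-Karger sampling of step \textbf{S2} does not accidentally heal the engineered deficits: since each gadget $H_i$ has strength $\Theta(d)$, its edges are sampled at rate $\Theta(\ln n/d)$ in \textbf{S2}, so the combined two-stage process dominates uniform sampling at rate $\Theta(\ln n/d)$ on $H_i$, which is exactly the regime in which the $G_{h_i,d}$ construction retains its deficit. Second, and this is the main obstacle, one must show that the deficits in distinct gadgets really do yield shortest augmenting paths of the prescribed different lengths throughout the execution of Hopcroft-Karp, not merely at the initial empty matching; this requires arguing that the Hopcroft-Karp phase structure, which finds a maximal vertex-disjoint family of shortest augmenting paths in each phase, cannot simultaneously process augmenting paths coming from gadgets of different target lengths. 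The vertex-disjointness of the gadgets is what allows this per-component analysis, reducing the global claim to a per-gadget statement about the sampled $G_{h_i,d}$.
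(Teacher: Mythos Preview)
Your proposal has a fundamental gap: if each disjoint gadget $H_i$ has a matching deficit in $G''$, then $G''$ has no perfect matching, contradicting Theorem~\ref{thm:correctness}, which guarantees a perfect matching in $G''$ whp for \emph{every} $d$-regular bipartite $G$, including your construction. Your domination argument in fact points the wrong way: if the two-stage sampling dominates uniform sampling at rate $\Theta(\ln n/d)$ (i.e., keeps at least as many edges), this makes a perfect matching \emph{more} likely, not less --- the deficit in Theorem~4.1 of \cite{gkk:rbp08} arises from undersampling, so ``dominates'' is exactly the direction that heals it. There is also an arithmetic slip: with $k=\Theta(n/d)$ gadgets of sizes $h_i=\Theta(id)$, the total vertex count is $\Theta(k^2 d)=\Theta(n^2/d)$, and $n^2/d\le n$ requires $d\ge n$, not $d\ge\sqrt n$.

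The paper's construction inverts your plan. It builds a single \emph{connected} chain of $t=n/d$ equal-size gadgets $H_1,\ldots,H_t$ (each a $(d-k)$-regular bipartite graph on $2d$ vertices, so the total is $2dt+2t=2n+o(n)$), together with $2t$ special vertices: $u_1,\ldots,u_t$ all attached to the $V$-side of $H_1$, and each $v_j$ attached to the $U$-side of $H_j$; consecutive $H_j$'s are linked by a few edge-disjoint matchings. The edges inside each $H_j$ have strength $\Omega(d)$, so after steps \textbf{S1} and \textbf{S2} each $H_j$ \emph{retains} a perfect matching whp. In the worst case, the first Hopcroft-Karp phase matches internally inside every $H_j$, leaving the $u$'s and $v$'s unmatched; the shortest augmenting path for the pair $(u_j,v_j)$ must then thread through $H_1,\ldots,H_j$, so its length grows with $j$, and each subsequent phase augments only one pair. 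The hardness comes from the chain topology and the special vertices, not from engineered deficits inside the gadgets.
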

\begin{proof}
 In what follows we omit the dependence of $d$ on $n$ for brevity. Define $H^{(k)}=(U, V, E)$, $0\leq k \leq d$, to be a $(d-k)$-regular bipartite graph with $|U|=|V|=d$.  The graph $G$ consists of $t$ copies of $H^{(k)}$, which we denote by $\left\lbrace H_j \right\rbrace_{j=1}^t$, where $H_j=H^{(t-j+1)}$, and $2t$ vertices $u_1,\ldots, u_t$ and $v_1, \ldots, v_t$. Each of $u_1,\ldots, u_t$ is connected to all $d$ vertices in the $V$-part of $H_1$, and 
 for  $1\leq j\leq t$, the vertex $v_j$  is connected to all vertices in the $U$-part of $H_j$. The remaining connections are established by adding $t-j$ edge-disjoint perfect matchings between the $U$ part of $H_j$ and the $V$ part of $H_{j+1}$ for all $1\leq j<t$.
 
 Set $t=n/d\leq \sqrt{n}/2\leq d/4$. Note that the strength of edges in $H_j$ is at least $d/4$, so whp there exists a perfect matching in subgraph of $H_j$ generated 
 by the sampling steps \textbf{S1} and \textbf{S2}, for $1 \le j \le t$. Suppose that at the first iteration of the Hopcroft-Karp algorithm a perfect matching is found in each $H_j$, thus leaving unmatched the vertices  $u_1,\ldots, u_t$ and $v_1, \ldots, v_t$.
 Then from this point on, the shortest augmenting path for each pair pair $(u_j, v_j)$ has length $j$, and each augmentation phase of the Hopcroft-Karp algorithm
 will increase the size of the matching by $1$. Hence, it takes $t$ augmentations to find a perfect matching. The number of vertices is $2(d+1)t=2n+o(n)$. 
\end{proof}

\section{Perfect Matchings in Doubly Stochastic Matrices}
\label{sec:bvn}
An $n \times n$ matrix $A$ is said to be {\em doubly stochastic} if every element is
non-negative, and every row-sum and every column-sum is 1. The celebrated
Birkhoff-von Neumann theorem says that every doubly stochastic matrix is a
convex combination of permutation matrices ({\em i.e.}, matchings). Surprisingly, the
running time of computing this convex combination (known as a Birkhoff-von
Neumann decomposition) is typically reported as $O(m^2\sqrt{n})$, even though
much better algorithms can be easily obtained using existing techniques or
very simple modifications. We list these running times here since there does
not seem to be any published record\footnote{This list was compiled by
  Bhattacharjee and Goel and is presented here to provide some context rather
  than as original work.}. After listing the running times that can be obtained
using existing techniques, we will show how proportionate uncrossings can be
applied to this problem to obtain a slight improvement.

\begin{enumerate}
\item An $O(m^2)$-time algorithm for finding a Birkhoff-von Neumann
  decomposition can be obtained by finding a perfect matching in the existing
  graph using augmenting paths (in time $O(mn)$), assigning this matching a
  weight which is the weight of the smallest edge in the matching, subtracting
  this weight from every edge in the matching (causing one or more edges to be
  removed from the support of $A$), and continuing the augmenting path
  algorithm without restarting. When a matching is found, if we remove $k$
  edges then we need to find only $k$ augmenting paths (finding each
  augmenting path takes time $O(m)$) to find another matching, which leads to
  a total time of $O(m^2)$.
\item Let $b$ be the maximum number of significant bits in any entry of
  $A$. An $O(mb)$-time algorithm for finding a single perfect matching in the
  support of a doubly stochastic matrix can be easily obtained using the
  technique of Gabow and Kariv~\cite{gk:edge1982}: repeatedly find Euler tours
  in edges where the lowest order bit (say bit $j$) is 1, and then increase
  the weight of all edges going from left to right by $2^{-j}$ and decrease
  the weight of all edges going from right to left by the same amount, where
  the directionality of edges corresponds to an arbitrary orientation of the
  Euler tour; this eliminates bit $j$ while preserving the doubly stochastic
  property and without increasing the support.
\item An $O(mnb)$-time algorithm to compute the Birkhoff-von Neumann
  decomposition can be obtained using the edge coloring algorithm of Gabow and
  Kariv~\cite{gk:edge1982}.
\end{enumerate}

We now show how our techniques lead to an $O(m\ln^3 n + n^{1.5}\ln n)$-time
algorithm for finding a single perfect matching in the support of a doubly
stochastic matrix. In realistic scenarios, this is unlikely to be better than
$(2)$ above, and we present this primarily to illustrate another application
of our proportionate uncrossing technique. First, define a weighted bipartite
graph $G=(P,Q,E)$, where $P=\{u_1, u_2, \ldots u_n\}$ corresponds to rows of
$A$, $Q = \{v_1, v_2, \ldots,v_n\}$ corresponds to columns of $A$, and
$(u_i,v_j)\in E$ iff $A_{i,j} > 0$. Define a weight function $w$ on edges,
with $w(u_i,v_j) = A_{i,j}$. Let $\coll$ be the collection of all pairs
$(A,B), A\subseteq P, B\subseteq Q, |P| > |Q|$. Since $A$ is doubly
stochastic, the collection $\coll$ is $(1/2)$-thick with respect to $(G, w,
E)$. Let $\colltwo$ be a $(1/2)$-uncrossing of $\coll$. Performing a
Bencz\'{u}r-Karger sampling on $G$ will guarantee (with high probability) that at
least one edge is sampled from every witness set in $W(\colltwo)$, and hence
running the Hopcroft-Karp algorithm on the sampled graph will yield a perfect
matching with high probability. The running time of $O(m\ln^3 n + n^{1.5}\ln
n)$ is just the sum of the running times of Bencz\'{u}r-Karger sampling for
weighted graphs~\cite{benczurkarger96} and the Hopcroft-Karp matching
algorithm~\cite{hk:match73}.

\newpage
\subsection*{Acknowledgments:} We would like to thank Rajat Bhattacharjee for many useful discussions on precursors to this work, and Michel Goemans for suggesting an alternate proof mentioned in remark \ref{rm:other-approach}.

\pdfbookmark[1]{\refname}{My\refname} 

\appendix

\section{Proof of Lemma~\ref{lem:relevant-only}}
\label{append:relevant-only}
Consider any $(A,B)$ where $|A| > |B|, A\subseteq P, B\subseteq Q$.  Define
$A_i = P_i \cap A$ and $B_i = Q_i \cap B$. Fix an $i$ such that $|A_i| >
|B_i|$; such an $i$ is guaranteed to exist. By the definition of relevance,
there exists a pair $(X,Y)\in \coll$ such that $X\subseteq A_i$, and $W(X,Y)
\cap E_R \subseteq W(A_i,B_i)\cap E_R$. By the assumption in the theorem,
there exists an edge $(u,v) \in E^* \cap E_R \cap W(X,Y)$. Since $W(X,Y) \cap
E_R \subseteq W(A_i,B_i)\cap E_R$, it follows that $(u,v) \in E^* \cap E_R
\cap W(A_i,B_i)$. This edge is in $G^*$, and goes from $A_i$ to $Q_i \setminus
B_i$, {\em i.e.}, from $A_i$ to $Q_i \setminus (Q_i \cap B)$, and hence, from $A$ to
$Q\setminus B$. Since the only assumption on $(A,B)$ was that $|A| > |B|$, we
can now invoke Hall's theorem to claim that $G^*$ has a perfect
matching.$\qed$

\section{Proof of Theorem~\ref{thm:Benczur-Karger}}
\label{append:bk}
As mentioned before, the proof is along very similar lines to that of the
Bencz\'{u}r-Karger sampling theorem, but does not follow in a black-box fashion
and is presented here for completeness. The proof relies on the following
result due to Karger and Stein~\cite{kargerstein96}:
\begin{lemma}
\label{lem:karger}
Let $H(V,E)$ be an undirected graph on $n$ vertices such that each edge $e$ has an associated
non-negative weight $\tilde{p}_e$. Let $s^*$ be the value of minimum cut in $H$
under the weight function $\tilde{p}_e$. Then for any $\alpha \geq 1$, the number of cuts in $H$
of weight at most $\alpha s^*$ is less than $n^{2\alpha}$.
\end{lemma}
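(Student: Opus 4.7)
\textbf{Proof plan for Lemma~\ref{lem:karger}.} I would prove this bound via Karger's random edge contraction algorithm, suitably adapted to the weighted setting. The strategy is to exhibit a randomized procedure that, for each cut $C$ with weight $w(C) \leq \alpha s^*$, outputs exactly $C$ with probability at least $n^{-2\alpha}$. Since output probabilities over a finite set of cuts sum to at most $1$, this immediately yields that the number of such cuts is strictly less than $n^{2\alpha}$.

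The algorithm is the natural weighted contraction: at each step, sample an edge $e$ with probability proportional to $\tilde p_e$, and contract its endpoints into a single supernode, retaining parallel edges with their weights but deleting resulting self-loops. For the moment assume $2\alpha$ is an integer; run the algorithm until only two supernodes remain, so that the induced bipartition of $V$ defines the output cut. The key invariant is that contraction never decreases the minimum cut below $s^*$ under the weights $\tilde p$, since every cut of the contracted multigraph corresponds to a cut of the original graph. Consequently, whenever $k$ supernodes remain, each supernode has weighted degree at least $s^*$, so the total remaining edge weight is at least $k s^*/2$. Conditional on the event that no edge of $C$ has been contracted yet, the probability of destroying $C$ at the next step is at most $w(C)/(k s^*/2) \leq 2\alpha/k$, and the survival probability is hence at least $1 - 2\alpha/k$.

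Multiplying across $k = n, n-1, \dots, 3$ gives a survival probability of at least $\prod_{k=3}^{n}(1 - 2\alpha/k)$, which for integer $2\alpha$ telescopes to a quantity of the form $\binom{n}{2\alpha}^{-1}$ up to a constant, and this is bounded below by $n^{-2\alpha}$ as needed. For non-integer $\alpha$, the product becomes negative or zero once $k \leq 2\alpha$, so I would follow the Karger--Stein refinement: halt the contraction once $r = \lceil 2\alpha \rceil$ supernodes remain and then deterministically enumerate all $2^{r-1} - 1$ non-trivial cuts of the residual multigraph, outputting each as a candidate. Every original cut of weight $\leq \alpha s^*$ that has survived the contractions appears as one of these residual cuts, so the output probability of any fixed such cut is at least $\bigl(\prod_{k=r+1}^{n}(1-2\alpha/k)\bigr) \big/ (2^{r-1}-1)$.

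The main obstacle will be verifying that this last quantity is bounded below by $n^{-2\alpha}$ without losing an extra constant, since the exponent in the statement is tight. This requires a careful analysis of the telescoping product $\prod_{k=r+1}^{n}(1-2\alpha/k)$ in terms of $\Gamma$-function ratios and combining it with the $2^{r-1}-1$ enumeration overhead; the choice $r = \lceil 2\alpha \rceil$ is the one that exactly balances these two factors. Once this calculation goes through, summing the individual per-cut lower bounds $n^{-2\alpha}$ against the total probability mass of $1$ gives the claimed count of fewer than $n^{2\alpha}$ cuts of weight at most $\alpha s^*$.
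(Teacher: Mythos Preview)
The paper does not give its own proof of Lemma~\ref{lem:karger}: it is simply quoted as a known result of Karger and Stein~\cite{kargerstein96}, and your contraction-plus-enumeration argument is exactly the standard proof from that reference. So at the level of strategy there is nothing to compare; you are reproducing the source argument that the paper cites.

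That said, your write-up contains a slip in the case split. You say that when $2\alpha$ is an integer you run the contraction all the way down to two supernodes and take the product $\prod_{k=3}^{n}(1-2\alpha/k)$, claiming it telescopes to roughly $\binom{n}{2\alpha}^{-1}$. This is only correct for $\alpha=1$; for any integer $2\alpha\geq 4$ the factor at $k=2\alpha$ is already zero, so the product vanishes and no bound follows. The correct product that telescopes to $\binom{n}{2\alpha}^{-1}$ is $\prod_{k=2\alpha+1}^{n}(1-2\alpha/k)$, which corresponds to halting at $r=2\alpha$ supernodes, not at two. In other words, the ``halt at $r=\lceil 2\alpha\rceil$ and enumerate'' branch that you describe for non-integer $\alpha$ is in fact the argument for \emph{all} $\alpha\geq 1$; there is no separate integer case. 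Once you merge the two cases, the computation you sketch---balancing $\prod_{k=r+1}^{n}(1-2\alpha/k)$ against the $2^{r-1}-1$ enumeration overhead via Gamma-function asymptotics---is the right thing to do, and is precisely the Karger--Stein calculation. One small language point: to make the ``probabilities sum to at most $1$'' argument go through you should output a single uniformly random cut of the $r$-node residual graph (as your division by $2^{r-1}-1$ implicitly assumes), not ``output each as a candidate''.
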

\begin{proofof}{Theorem~\ref{thm:Benczur-Karger}}
  We will choose $c = 5$. The first part of the proof shows that it is
  sufficient to bound a certain expression that involves only cuts. The second
  part then bounds this expression.

  For the first part, let $\mu(X) = \sum_{e\in X} p_e$ denote the expected
  number of edges chosen from $X$ by the sampling process. If a set $X\in
  \edgesets$ contains an edge $e$ with $p_e = 1$, then that edge will
  definitely be chosen, and that set does not contribute to $$ \sum_{ X \in
    \edgesets } \Pr[ {\rm No~edge~in~}X~{\rm is~chosen~in~}H']$$ and can be
  removed from $\edgesets$. Hence, assume without loss of generality that $p_e
  < 1$ for every edge in $\bigcup _{X\in\edgesets}X$. Define $\tilde{\mu}(X) =
  \sum_{e\in X} \left(\frac{c\ln n}{s_e}\right)$. Now for any set $X\in
  \edgesets$,

  $$\Pr[ {\rm No~edge~in~}X~{\rm is~chosen~in~}H'] =
  \prod_{e\in X}(1-p_e) \le \prod_{e\in X}e^{-p_e} \le e^{-\mu(X)},$$

  where
  $$\mu(X) = \frac{c \ln n}{\gamma}\sum_{e\in X} \frac{1}{s_e} >
  (c\ln n) \sum_{e\in f(X)} \frac{1}{s_e} = \tilde{\mu}(f(X)).$$ Since $f$ is
  a one-one function, it is sufficient to provide an upper-bound on
  $\sum_{C\in{\mathcal C}}e^{-\tilde{\mu}(C)}$.

  For the second part, let $\tilde{\mu}_1,
  \tilde{\mu}_2,\ldots,\tilde{\mu}_{2^n-2}$ be a non-decreasing sorted
  sequence corresponding to the multi-set $\{\tilde{\mu}(C): C\in {\cal
    C}\}$. Define $q_i = e^{-\tilde{\mu}_i}$. Consider an arbitrary cut
  $C$. Any edge in $C$ can have strength at most $|C|$, and hence
  $\tilde{\mu}(C) \ge c \ln n$, and therefore, $q_1 \leq n^{-c}$. So the sum
  of $q_i$ for the first $n^2$ cuts in the sequence is bounded by
  $n^{-c+2}$. We now focus on the remaining cuts.  By Lemma~\ref{lem:karger},
  we know that for any $\alpha \geq 1$, we have $\tilde{\mu}_{n^{2\alpha}}
  \geq \alpha \tilde{\mu}_1$.  Hence
  $$ \tilde{\mu}_{k} \geq \frac{\ln k}{2 \ln n} \tilde{\mu}_1,$$
  which in turn implies that $q_k \leq k^{-c/2}$. Thus

  $$  \sum_{ X \in \edgesets } \Pr[ {\rm No~edge~in~}X~{\rm is~chosen~in~}H'] \leq \sum_{C\in{\mathcal C}}e^{-\tilde{\mu}(C)} \leq \sum_{k=1}^{n^2} q_k + \sum_{k > n^2} q_k \leq  n^{-c+2} +   \sum_{k > n^2} k^{-c/2} = O(n^{-c+2}),$$
  giving us the desired result when we choose $c = 5$.
\end{proofof}

\section{Proof of Lemma~\ref{lem:strength}}
\label{append:strength}
Assume by way of contradiction that no such integer $j$ exists for some pair
of multisets $S_1$ and $S_2$.  Let $K$ be the largest integer in $S_1 \cup
S_2$, and let $\alpha_i$ and $\beta_i$ denote the number of occurrences of $i$
in the multisets $S_1$ and $S_2$ respectively.  Then for all $j \geq 1$, we
have $$ \sum_{i=j}^{K} {\frac{\alpha_i}{i}} \le \gamma \left( \sum_{i =j}^{K}
  {\frac{\beta_i}{i}} \right).$$ Summing the above inequality for all $j \in
\{1..K\}$, we get $$\sum_{i=1}^{K} {\alpha_i} \le \gamma \left( \sum_{i
    =1}^{K}{\beta_i} \right),$$ which is a contradiction since $|S_1| > \gamma
|S_2|$ by assumption.$\qed$

\end{document}